\documentclass[journal]{IEEEtran}

\IEEEoverridecommandlockouts
\usepackage{cite}
\usepackage{amsmath,amssymb,amsfonts}
\usepackage{graphicx}
\usepackage{textcomp}
\usepackage{algorithm}
\usepackage{algpseudocode}
\usepackage{comment}
\usepackage{mdwtab}
\usepackage{amssymb}
\usepackage{booktabs}
\usepackage{url}
\usepackage{acro}
\usepackage{bbding}
\usepackage{bbm}
\usepackage{blindtext}
\usepackage{enumerate}
\newtheorem{theorem}{Theorem}
\newtheorem{remark}{Remark}
\newtheorem{proof}{Proof}
\def\BibTeX{{\rm B\kern-.05em{\sc i\kern-.025em b}\kern-.08em
    T\kern-.1667em\lower.7ex\hbox{E}\kern-.125emX}}
    
\begin{document}

\title{D2D Assisted Multi-Antenna Coded Caching}
\author{\IEEEauthorblockN{Hamidreza Bakhshzad Mahmoodi$^\star$, Jarkko Kaleva$^\star$, Seyed Pooya Shariatpanahi$^*$ and Antti T\"olli$^\star$}
\IEEEauthorblockA{
    $\star$ Centre for Wireless Communications, University of Oulu, P.O. Box 4500, 90014, Finland \\
    $*$ School of Electrical and Computer Engineering, College of Engineering, University of Tehran, Iran, \\
    \textrm{firstname.lastname@oulu.fi, p.shariatpanahi@ut.ac.ir} 
    }
    	\thanks{The authors are with Centre for
 			Wireless Communications, University of Oulu P.O. Box 4500, FIN-90014 University
			of Oulu, Finland. Email: \{hamidreza.bakhshzadmahmoodi, antti.tolli, jarkko.kaleva\}@oulu.fi, p.shariatpanahi@ut.ac.ir
			
			Parts of this work has been submitted in the IEEE
			ISWCS, Oulu, Finland, 2019, the IEEE WCNC,
			Seoul, South Korea, 2020.
This work was supported by the Academy of Finland under grants no. 319059 (Coded Collaborative Caching for Wireless Energy Efficiency) and 318927 (6Genesis Flagship).}

}



\maketitle

\begin{abstract}
{A device-to-device (D2D) aided multi-antenna coded caching scheme is proposed to improve the average delivery rate and reduce the downlink (DL) beamforming complexity.} Novel beamforming and resource allocation schemes are proposed where local {data exchange} among nearby users is exploited. The transmission is split into two phases: local D2D content exchange and DL transmission. In the D2D phase, subsets of users are selected to share content with the adjacent users directly. {In this regard, a low complexity D2D mode selection algorithm is proposed to find the appropriate set of users for the D2D phase} with comparable performance to the optimal exhaustive search. {During} the DL phase, the base station multicasts the {remaining data requested by all the users}. {We identify scenarios and conditions where D2D transmission can reduce the delivery time.} {Furthermore, we demonstrate how} adding the new D2D phase to the {DL-only scenario} can significantly reduce the beamformer design complexity in the DL phase. The results further highlight that {by partly delivering requested data in the D2D phase, the transmission rate can be boosted due to more efficient use of resources during the subsequent DL phase.} As a result, the overall content delivery performance is greatly enhanced, especially in the finite signal-to-noise (SNR) regime.
\end{abstract}



\section{Introduction}
\label{sec:introduction}
The {daily surge in network traffic has pushed the network providers towards advanced data delivery methods to meet the client's demands. Luckily, a vast share of the network traffic is for multimedia content, such as video streams, which can be proactively cached before hands~\cite{benis5gcache,Paschos2016}.} Specifically, considering the upcoming data-intensive applications such as extended reality (XR), which rely heavily on asynchronous content reuse~\cite{Mahmoodi_immersive_isit2021, Location-dependent-ICC2022,mahmoodi2022asymmetric}, proactive caching at the end-users could relieve network congestion during {busy hours}. The {aim is to leverage} off-peak hours to move some content closer to the end-users, which will be used {to relieve the network's load} during peak hours. Coded caching (CC) has been recently proposed by Maddah-Ali and Niesen in~\cite{MaddahAli-2014}, {which improves the traditional caching schemes~\cite{mehdibennisD2D}  by benefiting from the aggregated memory throughout the network.} As a result, an additional \textit{global caching gain} is achieved on top of traditional local caching gain in the CC scheme.

The CC scheme {was initially} proposed for the error-free shared link framework (e.g., ~\cite{MaddahAli-2014, Karamchandi2016}, and~\cite{ Pedarsani2016}), but {it was promptly extended to wireless setup}~\cite{Shariatpanahi2016, pooya-cc-physical-2019-journal, Naderalizadeh2017,Naderalizadeh2017-2}.
In this regard, the setup in~\cite{MaddahAli-2014} is extended in~\cite{Shariatpanahi2016} to a multiple server scenario, where the effects of the network structure on the CC performance {are investigated}. Motivated by the results in~\cite{Shariatpanahi2016}, the authors in~\cite{pooya-cc-physical-2019-journal} {propose} different multi-antenna {CC} transmission schemes for the physical layer. {Furthermore, focusing on high signal-to-noise-ratio (SNR) region analysis}, authors in~\cite{Naderalizadeh2017, Naderalizadeh2017-2} show that {CC} boosts the wireless network performance in terms of Degrees-of-Freedom (DoF). Specifically, the global coded caching gain {offered by the CC} and the spatial multiplexing gain {provided by a multi-antenna transmitter} are additive in a wireless broadcast channel~\cite{Shariatpanahi2017}. { Moreover, the multiantenna CC is also highly beneficial with finite SNR as long as the interference is appropriately compensated~\cite{Ngo2017, Piovano2017,  Tolli-Shariatpanahi-Kaleva-Khalaj-Arxiv18}}. Specifically, while \cite{Ngo2017} and \cite{Piovano2017} use a rate-splitting approach to simultaneously benefit from the global caching and the spatial multiplexing gains, 
{a general beamformer design was proposed in~\cite{Tolli-Shariatpanahi-Kaleva-Khalaj-Arxiv18}} to improve the finite-SNR performance of the coded caching in wireless networks.

Despite {significant} CC benefits in wireless networks, several practical challenges have been identified in the literature~\cite{privateCC, D2DsecureCC, CCreducedblocklength, petros_subpack, MJTWC2021, Sharedcache_emanuel, Sharedcache_decenter,  mohamadsubpack, Alpha_fair_CC, UOS_for_CC, Doublecc_ML}. {For instance, users' privacy requirements to prevent information leakage have been specified in}~\cite{privateCC}. In this regard, authors in~\cite{D2DsecureCC} extend the idea in~\cite{privateCC} to a D2D CC network, using non-perfect secret sharing and one-time pad keying techniques. The number of file division requirements in CC, known as \textit{subpacketization} problem, is considered in~\cite{CCreducedblocklength,petros_subpack, MJTWC2021, Sharedcache_emanuel,mohamadsubpack, Sharedcache_decenter}. {While authors in~\cite{CCreducedblocklength} benefit from the linear block code (LBC) technique to reduce the subpacketization, authors in \cite{petros_subpack, MJTWC2021} benefit from the number of antennas at the transmitter. In a similar work, the effect of the subpacketization on the achievable rate in the low-SNR region was also investigated in~\cite{mohamadsubpack}. Finally, authors in\cite{Sharedcache_emanuel, Sharedcache_decenter} propose the shared cache scheme to efficiently control the subpacketization requirement.} 

Another crucial issue in the {CC} wireless setting is \textit{near-far} problem. {Explicitly, in conventional CC delivery schemes, a common message is transmitted to several users. Thus, the rate is always limited by the user with the worst channel condition to ensure every user can decode the common message}~\cite{Tolli-Shariatpanahi-Kaleva-Khalaj-Arxiv18}. In this regard, the authors in~\cite{Alpha_fair_CC} propose a congestion control to avoid serving users in a fading state, {while the authors in~\cite{UOS_for_CC} propose to serve users when they are in favorable conditions via appropriate scheduling.} {Using a different approach, authors in~\cite{Doublecc_ML} benefit from deep reinforcement learning (DRL) to jointly design caching and scheduling schemes, tackling the near-far problem}. {Unlike}~\cite{UOS_for_CC} and~\cite{Alpha_fair_CC}, which apply traditional XORing of data elements, {authors in~\cite{tang2017coded} and~\cite{Mahmoodi_immersive_isit2021} allocate different rates to users with different channel conditions by altering the modulation scheme}. Finally, one critical drawback of the traditional wireless CC delivery is the {high complexity of the optimized} beamforming solutions, which can {render multiantenna CC} infeasible for large networks~\cite{Tolli-Shariatpanahi-Kaleva-Khalaj-Arxiv18}. 

{The CC cache placement must be carefully designed so that every user has some data needed by others~\cite{MaddahAli-2014}}. {On the other hand, in many particular use cases, such as immersive viewing applications, groups of users are being served together~\cite{Mahmoodi_immersive_isit2021, Location-dependent-ICC2022,mahmoodi2022asymmetric}. In such scenarios, users can intermingle at close distances, making the device-to-device (D2D) communication quite alluring.}  {Motivated by this}, a two-phase transmission scheme is proposed in this paper, where the downlink (DL) multicasting of the file fragments in~\cite{Tolli-Shariatpanahi-Kaleva-Khalaj-Arxiv18} is complemented by the direct D2D exchange of local cache contents. 

 \subsection{Related works}
 D2D communication has been studied extensively in wireless scenarios for traditional caching methods, where an entire file is placed at the end-user (see for example~\cite{traditional-d2d-magazine-2014, hierarchicald2dcachingsolution, mehdibennisD2D, MDS-CC-D2D-2019}). The idea is that utilizing D2D communication results in less congested servers, improved network energy efficiency (EE), mitigated near-far problems, etc. {On the other hand, currently, interest is increasing in machine learning (ML) tools such as deep deterministic policy gradient (DDPG)~\cite{DRL2020} as a means for various wireless resource allocation problems (e.g.,~\cite{UAV-ML}). In this regard, authors have considered applying ML in cache-aided data delivery, including D2D communications~\cite{ML-D2D-2021}. Specifically, ML tools can be utilized for cache placement or delivery schedules for cache-aided networks~\cite{ML-D2D-2021}, including coded caching~\cite{Doublecc_ML}}.   
 
 {Interestingly}, D2D ideas have been extended to the CC wireless networks to improve network throughput~\cite{Ji2016, D2D-CC-Optload-memtradeof-caire-2019, D2D-CC-H-cachesize-2020, CC-D2D-pooya-2016,  DPDA2019, Centralized-CC-user-cooperation-2019, D2Daimilar2me}. An infrastructure-less {CC} network is considered in~\cite{Ji2016}, where {the D2D is the only transmission link. Moreover} constructive achievable coding strategies and information-theoretic bounds are also specified in~\cite{Ji2016}. Later, the authors in~\cite{D2D-CC-Optload-memtradeof-caire-2019} characterized the optimal load-memory trade-off under the {uncoded placement and one-shot delivery assumptions}. 
 Yet, {the authors in~\cite{D2D-CC-Optload-memtradeof-caire-2019} extend the system model in~\cite{Ji2016} such that robustness against random user inactivity is achieved via} maximum distance separation (MDS) coding. {Users with different available memories are considered in~\cite{D2D-CC-H-cachesize-2020} to extend the scenario in~\cite{Ji2016} to a more practical one.} {On the other hand, the authors in~~\cite{CC-D2D-pooya-2016} show that the frequency reuse gain and CC multicasting gain are additive when users move inside the D2D  network.} The authors in~\cite{DPDA2019} extend~\cite{Ji2016} to a general framework {by utilizing the placement delivery array (PDA), alleviating the subpacketization problem.}
 
Authors in~\cite{Centralized-CC-user-cooperation-2019} extend the network in~\cite{MaddahAli-2014} to the case where users can cooperate via D2D links. {Therein, the} D2D and DL transmissions are assumed to be carried out in parallel without interference. Thus, to minimize the total transmission time, the authors in~\cite{Centralized-CC-user-cooperation-2019} optimally divide the transmission load into D2D and DL parts. {Unlike~\cite{Ji2016, D2D-CC-Optload-memtradeof-caire-2019, D2D-CC-H-cachesize-2020, CC-D2D-pooya-2016,  DPDA2019, Centralized-CC-user-cooperation-2019} with fixed rates assumption,} the authors in~\cite{D2Daimilar2me} consider a multi-user single-input-single-output (SISO) scenario where transmission rates are changed depending on channel conditions. In this regard, D2D transmissions are utilized along with the DL transmission to decrease the transmission time. Specifically, the BS decides whether to transmit a sub-file in DL or D2D fashion based on users' neighboring conditions. {Furthermore,} different amounts of memory {are dedicated} to the D2D and DL sub-files {in~\cite{D2Daimilar2me}} to minimize the delivery time.
\subsection{Main contribution} 
{Although CC-based D2D communication has been thoroughly investigated in the literature for single-antenna transceivers~\cite{Ji2016, D2D-CC-Optload-memtradeof-caire-2019, D2D-CC-H-cachesize-2020, CC-D2D-pooya-2016,  DPDA2019, Centralized-CC-user-cooperation-2019, D2Daimilar2me}, a D2D framework supporting CC-based multi-antenna communication is still missing considering the current multi-antenna transmission support in the 5G standards~\cite{5GNR}. In this regard, this paper extends the framework studied in the previous studies~\cite{Ji2016, D2D-CC-Optload-memtradeof-caire-2019, D2D-CC-H-cachesize-2020, CC-D2D-pooya-2016,  DPDA2019, Centralized-CC-user-cooperation-2019, D2Daimilar2me} to a multi-antenna transmitter scenario, where we consider a single-cell multi-user multiple-input single-output (MISO) network. The proposed delivery scheme in this work comprises D2D links as a complementary transmission phase to the DL transmission.} The cache placement is based on~\cite{MaddahAli-2014}; {thus, every user has some data that can be shared with other users.} The goal is {to minimize the total transmission time by delivering a part of the requested data via D2D transmissions while the rest is delivered by the base station (BS)}. To find the optimal {combination of D2D and DL transmissions, we should be able to compare the available} D2D and the DL rates. However, {the optimal D2D/DL mode selection is a combinatorial problem{, requiring} an exhaustive search for D2D opportunities over a group of users. Hence, due to the NP-hard nature of the problem, the exhaustive search quickly becomes computationally intractable.} On the other hand, to know the exact achievable DL rate, all the beamformers {should} be first designed {for all possible combinations}. {As mentioned before, this is challenging} due to the high computational complexity of multicast beamforming. Consequently, we propose an approximation for the DL achievable rate without computing the beamformers to {resolve the beamforming} complexity issue. Next, based on the approximated DL rate, we provide a low complexity mode selection algorithm, which allows efficient determination of D2D opportunities even for a large number of users. The computational complexity of the proposed algorithm is significantly reduced {compared} to the exhaustive search {with similar} performance. 

{In infrastructure-less networks, as in~\cite{Ji2016}, the geographical separation of users severely affects the total transmission time. However, using the D2D transmission phase as a complementary phase for the DL transmission results in steadier behavior against geographical separation than in the DL-only case in~~\cite{Tolli-Shariatpanahi-Kaleva-Khalaj-Arxiv18} and the D2D-only case in~\cite{Ji2016}. In addition, it significantly shortens the delivery time}. On the other hand, {by} allowing direct D2D exchange of file fragments, the interference management {among a reduced number of} downlink multicast streams becomes more efficient {than in}~\cite{Tolli-Shariatpanahi-Kaleva-Khalaj-Arxiv18}. {Finally}, we thoroughly investigate the complexity reduction of the DL beamformer due to the proposed complementary D2D phase. This paper extends our earlier {conference papers}~\cite{ISWCSmyself,Mahmoodi-etal-Arxiv19} to a general {D2D-aided multiantenna CC}  framework without any group size restrictions and also provides {a more detailed analysis of beamforming complexity}. {Note that this work focuses on instantaneous channel knowledge for D2D/DL mode selection and does not consider any ML-based technique. The reason is that due to numerous practical challenges, satisfactory utilization of ML techniques for cache-aided networks is still missing. A thorough survey on ML utilization challenges for cache-aided networks (e.g., data integrity, dynamic environment, utilization of big data) can be found in~\cite{ML-cacheaided-2020}}.

\subsection{Organization and notations}
The rest of the paper is organized as follows. In Section \ref{sec:sysmodel} the system model {is introduced}. In Section \ref{sec:examples} we {illustrate} the main {elements} of the proposed {scheme} via two examples. Then, Section~\ref{sec:general} {demonstrates the proposed scheme for the general case}. Section~\ref{sec:complexity_analyze} {analyzes} the complexity reduction of the DL beamforming {due to the complementary D2D phase}. Section \ref{sec:simres} presents numerical results, and Section \ref{sec:conclusions} concludes the paper.
\subsubsection*{Notations}
Matrices and vectors are presented {in} boldface
upper and lower case letters, respectively. The Hermitian of the matrix $\mathbf{A}$ is denoted as $\mathbf{A}^{\mathrm{H}}$. Cardinality of a discrete set $\mathcal{A}$ is given by $|\mathcal{A}|$. Let $\mathbb{C}$ denote the set of complex numbers and $\|.\|$ be the norm of a complex vector. Also, $[m]$ denotes the set of
integer numbers $\{1, . . . ,m\}$, and $\oplus$ represents addition in the corresponding finite field. {Finally, Table~\ref{table:main notation} collects the main notations used in this paper.}
\begin{table}[t]
\begin{tabular}{|l|l|l|l|}
 \hline
 \multicolumn{4}{|c|}{List of Notations} \\
 \hline
$\mathcal{W}$ & Library & $Z_k$ & Cache content  \\ \hline
 $\tau$ &  Global caching gain  & $W_{n,\mathcal{V}}$ & Subfile $n$   \\ \hline
 $d_k$ & Request index & $\mathcal{D}$(t) & D2D set \\ \hline
 $\mathcal{R}^{\mathcal{D}}(i)$ & D2D receiver set  & $X^\mathrm{D2D}_i$ & D2D transmit signal \\ \hline
 $P_d$& User transmit power & $N_0$ & Noise variance \\ \hline
$h_{ik}$ & D2D channel & $y_k$ & Received signal \\ \hline
$z_k$ & Receiver noise & $\tilde{X}_{ \mathcal{T}}^{\mathcal{S}}$ & DL transmit message \\ \hline
$\mathbf{w}_{ \mathcal{T}}^{\mathcal{S}}$ & DL beamformer & $\mathbf{h}_k$ & DL channel \\ \hline
$T_\text{D2D}$ & D2D delivery time & $T_\text{DL}$ & DL delivery time \\ \hline
$R_{\text{U}}$ & Symmetric rate & $P_T$& BS transmit power\\ \hline
$\mathbf{x}_{DL}$ & DL transmit signal & $I_{\text{D2D}}(\mathcal{D})$ & D2D user set indicator \\ \hline
$C(.)$ & DL file size & $\overline{\Omega^{\mathcal{S}}}$, $\overline{\Omega^{\mathcal{S},\mathcal{V}}}$ & Set of D2D user sets \\ \hline
 ${\Omega^{\mathcal{S}}}$ & Set of DL messages &  ${\Omega_{k}^{\mathcal{S}}}$ & Needed message set \\ \hline
$\mathcal{I}_k$ & Interference set & $N^{i}_{\mathcal{S}}$, $N^{i}_{\mathcal{V}}$ & DL message count \\ \hline
\end{tabular} \vspace{2mm}
\caption{The list of main notations.}
\label{table:main notation}
\end{table}

\section{System Model}
\label{sec:sysmodel}

We consider a system consisting of a single $L$ antenna {BS} and $K$ single antenna users. The {BS} has a library of $N$ files, namely $\mathcal{W}=\{W_1,\dots,W_N\}$, where each file has the size of $F$ bits.  {The} normalized cache size (memory) {of each user} is $M$ files. Each user $k$ stores a {fragment of each file}, denoted by $Z_k(W_1,\dots, W_N)$ during the {\it cache placement} phase (cache placement is identical to~\cite{MaddahAli-2014}). {Thus}, for the case $\tau = \frac{KM}{N} \in \mathbb{N}$, each file is divided into $\binom{K}{\tau}$ non-overlapping subfiles, i.e.,
$W_n = \{ W_{n,\mathcal{V}} : \mathcal{V} \subset [K], |\mathcal{V}| = \tau\}, \ \forall n \in [N]$.
Each user $k$ stores all the subfiles $\mathcal{V} \ni k$, i.e., $Z_k = \{W_{n,\mathcal{V}} \ | \ k \in \mathcal{V},   \mathcal{V} \subset [K],  |\mathcal{V}| = \tau,   \forall n \in [N] \}$. During the {\it content delivery phase}, user $k \in [K]$ makes a request for the file $W_{d_k}, d_k \in [N]$.  
\begin{figure}
   \centering
   \includegraphics[width=1\columnwidth,keepaspectratio]{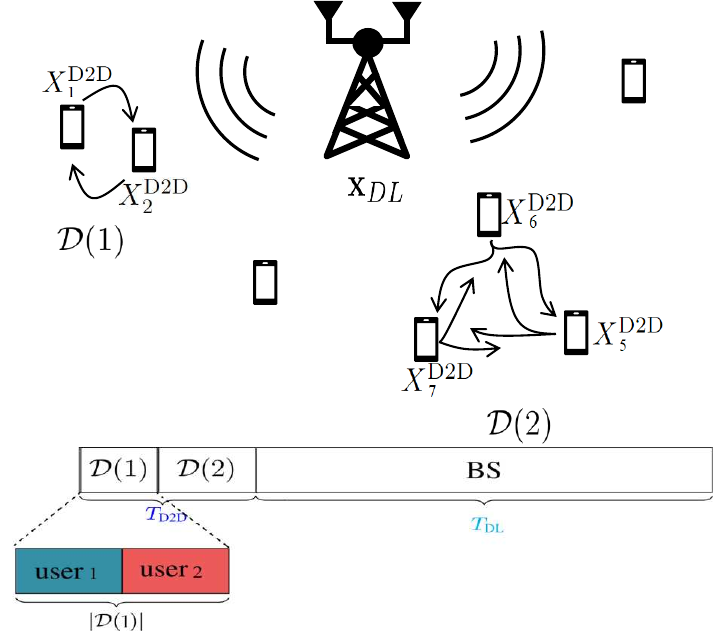}
   \caption{Time division {between} D2D {and DL} transmission slots {for a network with $K=7$ users and two D2D groups $\mathcal{D}(1) = \{1,2\}$ and $\mathcal{D}(2) = \{5,6,7\}$.} The total transmission time to serve all the users is $T_{\text{D2D}}+T_{\text{DL}}$.} \label{fig:timeDiv}
\end{figure}

{We envision a scenario where a group of users is being served simultaneously. These users are free to move inside the coverage area, and at specific times, they request some content from the server. The immersive viewing application illustrated in~\cite{Mahmoodi_immersive_isit2021}, and~\cite{Location-dependent-ICC2022} can be considered a specific use case. Thus, it is very likely to have some nearby users at each time instance who can share some content among themselves via D2D transmissions}. Accordingly, the delivery phase starts with {the D2D} sub-phase, divided into a number of D2D time slots. In each time slot $t$, a group of nearby users, denoted by set $\mathcal{D}(t)$, are instructed by the BS to locally exchange data. Furthermore, each D2D time slot is divided into  $ |\mathcal{D}(t)|$ orthogonal D2D transmissions (see Fig. \ref{fig:timeDiv}). In each D2D transmission, a user $i \in \mathcal{D}(t)$ transmits a coded message denoted by $X^\mathrm{D2D}_i$ to an intended set of receivers $\mathcal{R}^{\mathcal{D}}(i) \subseteq \mathcal{D}(t)$. Thus, the message $X^\mathrm{D2D}_i$ can be transmitted at rate\footnote{In this paper, we assume all D2D user groups $\mathcal{D}(t)$ are served {using time-division multiplexing}. Further improvement can be achieved by allowing parallel mutually interfering transmissions within multiple groups.}
\begin{equation}\label{eq:D2D_multicast_rate}
R_i^{\mathcal{D}} = \min_{k \in \mathcal{R}^{\mathcal{D}}(i)}\log\left( 1+\frac{P_d | h_{ik} |^2}{N_0}\right),
\end{equation} 
where $P_d$ is the user's available transmit power, and $h_{ik} \in \mathbb{C}$ 
is the channel {coefficient} between user $i$ and user $k$. Note that in each D2D transmission, a single user in $\mathcal{D}$ multicasts a message to the rest of the group members. {Hence, the weakest receiver limits the rate.} Furthermore, following~\cite{Ji2016}, {each subfile is transmitted in a D2D group $|\mathcal{D}|-1$ times.} Therefore, subfiles are further divided into $|\mathcal{D}|-1$ smaller parts {for D2D transmissions to ensure fresh content is transmitted}.  

The D2D subphase is followed by the downlink phase, where the {BS} multicasts coded messages containing all the remaining file fragments {so that all users can} decode their requested content. 
The received downlink signal at user terminal $k \in [K]$ is given by \begin{equation}
\label{eq:recv_signal}
    y_k = \mathbf{h}_{k}^{\text{H}} \sum_{ \mathcal{T} \subseteq \mathcal{S}} \mathbf{w}_{ \mathcal{T}}^{\mathcal{S}} \tilde{X}_{ \mathcal{T}}^{\mathcal{S}} + z_k
    \text{,}
\end{equation}
where $\tilde{X}_{ \mathcal{T}}^{\mathcal{S}}$ is the multicast message chosen from a unit power complex Gaussian codebook dedicated to all the users in subset $ \mathcal{T}$ of set $\mathcal{S} \subseteq [K] $ (provided that subset $\mathcal{T}$ has not been considered in {D2D} transmissions). The channel vector between the {BS} and user $k$ is denoted by $\mathbf{h}_k \in \mathbb{C}^L$, the receiver noise is given by $z_k \sim \mathbb{CN}(0, N_0)$, {and $\mathbf{w}_{\mathcal{T}}^{\mathcal{S}}$ is the beamforming vector dedicated to $\tilde{X}_{ \mathcal{T}}^{\mathcal{S}}$}. The channel state information at the transmitter (CSIT) of all $K$ users is assumed to be perfectly known. Finally, the total achievable rate (per user) over the above-described two phases is given by 
\begin{equation}\label{eq:total_rate}
 R_{\text{U}}   =\frac{F}{T_\text{D2D} + T_\text{DL}}
    \text{,}
\end{equation}
where $T_\text{D2D}$ and $T_\text{DL}$ denote the transmission time for the D2D and DL sub-phases, respectively.

\section{D2D Aided Beamforming Explained: Example}
\label{sec:examples}
\begin{figure}
    \centering
    \includegraphics[width=1\columnwidth,keepaspectratio]{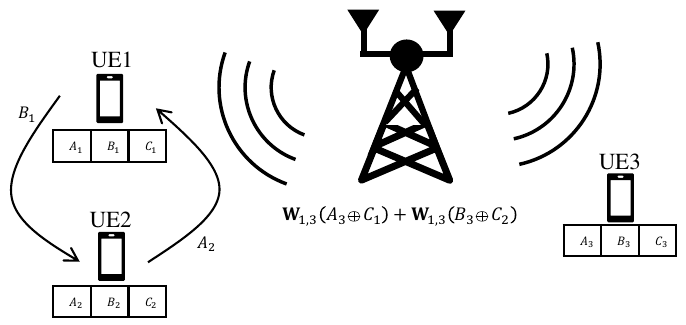}
    \caption{Example 1: D2D enabled downlink beamforming system model, for $K=3, \ L=2, \ \text{and} \ \tau= 1$.}
     \label{fig:sysmodel}
\end{figure}

In this section, we {illustrate} the main {idea} of the proposed {scheme} via two examples. In the first example, we have a network of $3$ users, and in the second example, the number of users is increased to $4$.

\subsection{Example 1: $K=3$, $N=3$, $M=1$, and $L=2$} \label{sec:Example1}
In this example, {the network is comprised of} $K=3$ users and a library $\mathcal{W}=\{A, B, C\}$ of $N=3$ files, where each user has the cache size $M=1$ file. The base station is equipped with $L=2$ transmit antennas. To begin with, the cache content $Z_k$ at each user $k = 1,\ldots,K$ is 
    $Z_1 = \{A_1, B_1, C_1\}, \ Z_2 = \{A_2, B_2, C_2\}, \ Z_3 = \{A_3, B_3, C_3\}$.
{Note that} each file is divided into three equal-sized sub-files, {following} the cache placement in~\cite{MaddahAli-2014}. In this example, we assume users $1$ and $2$ {are close to each other (c.f. Fig.~\ref{fig:sysmodel})}.

Without loss of generality, assume users $1$, $2$, and $3$ request files $A$, $B$, and $C$, respectively. Now, {the delivery is carried out} in two phases. In the first phase, i.e., the D2D phase, users $1$ and $2$ are assumed to share their local cache content {through the D2D transmission}. Thus, a single D2D time slot with $\mathcal{D}=\{1,2\}$ {exists}. It is evident that user $2$ would request $B_1$ from user $1$, and user $1$ would request $A_2$ from user $2$. The {D2D} transmission is assumed to be half-duplex, i.e., {two uni-directional D2D transmissions are included in each time slot}. The time required for the {D2D} phase is given by 
    $T_\text{D2D} = T\left(1 \rightarrow \mathcal{R}^{\mathcal{D}}(1)\right)+ T\left(2 \rightarrow \mathcal{R}^{\mathcal{D}}(2)\right)
    = \frac{F / 3}{R^{\mathcal{D}}_1} + \frac{F / 3}{R^{\mathcal{D}}_2}
    \text{,}$ 
where $\mathcal{R}^{\mathcal{D}}(1)=\{2\}$, $\mathcal{R}^{\mathcal{D}}(2)=\{1\}$, and
$R^{\mathcal{D}}_1=\log\left(1 + \frac{P_d \lVert h_{12} \rVert^2}{N_0}\right), \
R^{\mathcal{D}}_2=\log\left(1 + \frac{P_d \lVert h_{21} \rVert^2}{N_0}\right)$.
Note that $\frac{F}{3}$ fractions of the {requested files are delivered in each transmission}.

In the second phase {(i.e., DL transmission)}, the {BS} multicasts the remaining content via coded messages. In the given example, user $3$ {is} not served in the {D2D} phase and still {needs} $C_1$ and $C_2$. However, users $1$ and $2$ only require $A_3$ and $B_3$, respectively. These contents are XOR coded over two messages for user pairs $(1,3)$ and $(2,3)$. Namely, the messages are $X_{1,3} = A_3 \oplus C_1$ and $X_{2,3} = B_3 \oplus C_2$. {Note that} $X_{1,3}$ simultaneously benefits both users $1$ and $3$. Similarly, $X_{2,3}$ is a coded message intended for users $2$ and $3$. {Accordingly}, the multicast beamformer vectors $\mathbf{w}_{1,3}$ and $\mathbf{w}_{2,3}$ are associated with messages $X_{1,3}$ and $X_{2,3}$, respectively. The downlink signal is then formed as
    $\mathbf{x}_{DL}=\tilde{X}_{1,3}\mathbf{w}_{1,3}+\tilde{X}_{2,3}\mathbf{w}_{2,3}$, 
where $\tilde{X}$ stands for the modulated $X$, chosen from a unit power complex Gaussian codebook \cite{Tolli-Shariatpanahi-Kaleva-Khalaj-Arxiv18}.
Here, user $3$ is assumed to use a successive-interference-cancellation (SIC) receiver to decode both intended messages (interpreted as a multiple-access-channel (MAC)). In contrast, users $1$ and $2$ get served with a single message, with the other seen as interference.  

Now, assume user $3$ can decode \textit{both} messages $\tilde{X}_{1,3}$ and $\tilde{X}_{2,3}$ with the equal rate\footnote{Symmetric rate is imposed to minimize the time needed to receive both messages  $\tilde{X}_{1,3}$ and $\tilde{X}_{2,3}$.} 
$R^3_{MAC}=\min \left(\frac{1}{2} R^3_{Sum}, R^3_1, R^3_2\right)$,  
where the rate region corresponding to $\tilde{X}_{1,3}$ and $\tilde{X}_{2,3}$ is limited by
    $R^3_1 =\log\left(1+\frac{|\mathbf{h}_3^H \mathbf{w}_{1,3}|^2} {N_0}\right),
    R^3_2=\log\left(1+\frac{|\mathbf{h}_3^H \mathbf{w}_{2,3}|^2} {N_0}\right)$,
and
     $R^3_{Sum}=\log\left(1+\frac{|\mathbf{h}_3^H \mathbf{w}_{1,3}|^2+|\mathbf{h}_3^H \mathbf{w}_{2,3}|^2} {N_0}\right)$.
Accordingly, the corresponding downlink beamforming problem can be expressed as
$\max_{\mathbf{w}_{2,3}, \mathbf{w}_{1,3}}  \min (R^3_\text{MAC}, R^1_1, R^2_1)$,
where the rates of users $1$ and $2$ are given as
\begin{align} \nonumber 
    R^1_1&=\log\left(1+\frac{|\mathbf{h}_1^H \mathbf{w}_{1,3}|^2} {|\mathbf{h}_1^H \mathbf{w}_{2,3}|^2+N_0}\right) , \\
    R^2_1&=\log\left(1+\frac{|\mathbf{h}_2^H \mathbf{w}_{2,3}|^2} {|\mathbf{h}_2^H \mathbf{w}_{1,3}|^2+N_0}\right).\nonumber
\end{align}
Here, due to D2D transmissions, the beamforming problem {differs from the one proposed in} ~\cite{Tolli-Shariatpanahi-Kaleva-Khalaj-Arxiv18}. {The partial file exchange in the D2D phase allows focusing the available DL power on fewer multicast streams. In addition, it alleviates the DL phase's inter-stream interference conditions, making the DL multicast beamforming more efficient} and {easier} to design (see Section~\ref{sec:complexity_analyze}). 
On the other hand, the D2D transmission requires an orthogonal allocation in {the} time domain. This introduces an inherent trade-off between the {number} of resources allocated to the D2D and DL phases.

Finally, the corresponding symmetric rate maximization is given as
\begin{equation} \nonumber
\begin{array}{rl}
	\label{prob:dlprob_ex1}
	
	\underset{\substack{\mathbf{w}_{i,j}, \gamma^k_l, r}}{\max} & 
	
        r
	\\
	\mathrm{s.\ t.} 
    &   r \leq \frac{1}{2} \log(1 + \gamma^3_1 + \gamma^3_2), \ 
       r \leq \log(1 + \gamma^3_1), \\& r \leq \log(1 + \gamma^3_2), \
       r \leq \log(1 + \gamma^1_1), \\& r \leq \log(1 + \gamma^2_1), \ \gamma^1_1 \leq \frac{|\mathbf{h}_1^\mathrm{H}\mathbf{w}_{1,3}|^2}{|\mathbf{h}_1^\mathrm{H}\mathbf{w}_{2,3}|^2 + N_0}, \\
          & \gamma^2_1 \leq \frac{|\mathbf{h}_2^\mathrm{H}\mathbf{w}_{2,3}|^2}{|\mathbf{h}_2^\mathrm{H}\mathbf{w}_{1,3}|^2 + N_0} 
    , \
      \gamma^3_1 \leq \frac{|\mathbf{h}_3^\mathrm{H}\mathbf{w}_{1,3}|^2}{N_0},  \\
         & \gamma^3_2 \leq \frac{|\mathbf{h}_3^\mathrm{H}\mathbf{w}_{2,3}|^2}{N_0}, \
        \|\mathbf{w}_{1,3}\|^2 + \|\mathbf{w}_{2,3}\|^2 \leq P_T
		\text{.}
\end{array}
\end{equation}
Where $P_T$ is the total available power at the BS. The rate constraints can be written as a convex second-order cone problem as shown in~\cite{Tolli-Shariatpanahi-Kaleva-Khalaj-Arxiv18}. However, the signal-to-interference-plus-noise
ratio (SINR) constraints are non-convex and require an iterative solution. A successive convex approximation (SCA) solution for the {SINR} constraints can be found, e.g., in~\cite{Tolli-Shariatpanahi-Kaleva-Khalaj-Arxiv18}. Please notice that, here, due to D2D transmission in the first phase, we have only two beamformer vectors ($\mathbf{w}_{1,3}$ and $\mathbf{w}_{2,3}$), which means we can dedicate more power to our intended signals ($X_{1,3}$ and $X_{2,3}$)  compared to~\cite{Tolli-Shariatpanahi-Kaleva-Khalaj-Arxiv18}. The time required for the {DL} phase is given by 
    $T_\text{DL}= \frac{F/3}{r} = \frac{F/3}{\max_{\mathbf{w}_{2,3}, \mathbf{w}_{1,3}}  \min (R^3_\text{MAC}, R^1_1, R^2_1)}
    \text{.}$
{Here, similar to the first phase,} all users are served with coded messages of size $\frac{F}{3}$ bits. Finally, the achievable rate over the D2D and DL phases is given in~\eqref{eq:total_rate}.

\subsection{Example 2: $K=4$, $N=4$, $M=2$, and $L=2$}\label{sec:Example2}
\begin{figure}
    \centering
    \includegraphics[width=1\columnwidth,keepaspectratio]{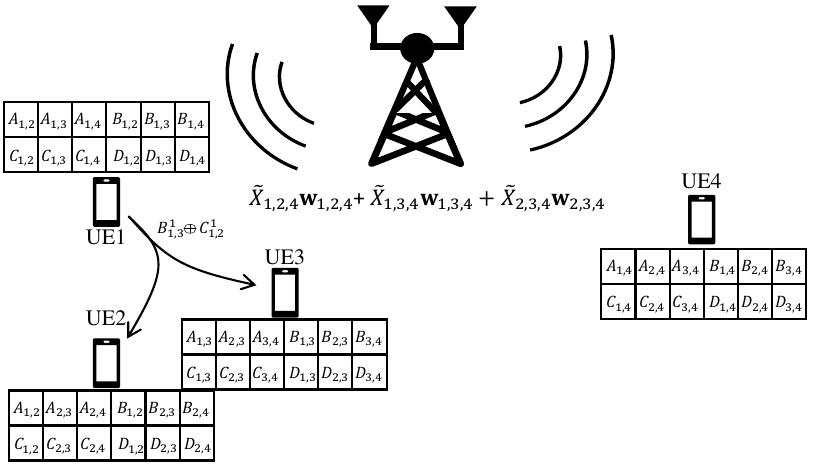}
    \caption{Example 2: D2D enabled downlink beamforming system model, where $K=4, \ L=2, \ \text{and} \ \tau = 2$.}
    \label{fig:sysmodel_complex}
\end{figure}
In this example,  we consider a scenario where $K=4$ users and a library $\mathcal{W}=\{A,B,C,D\}$ of $N=4$ files, {and} each user has a cache for storing $M=2$ files. Also, the base station is equipped with $L=2$ transmit antennas. Following the cache placement in~\cite{MaddahAli-2014}, each file is split into $\binom{K}{\tau}=\binom{4}{2}=6$ subfiles as follows
\begin{align} \nonumber
A&=\{A_{1,2}, A_{1,3}, A_{1,4}, A_{2,3}, A_{2,4}, A_{3,4}\}, \\ \nonumber
B&=\{B_{1,2}, B_{1,3}, B_{1,4}, B_{2,3}, B_{2,4}, B_{3,4}\}, \\ \nonumber
C&=\{C_{1,2}, C_{1,3}, C_{1,4}, C_{2,3}, C_{2,4}, C_{3,4}\}, \\ \nonumber
D&=\{D_{1,2}, D_{1,3}, D_{1,4}, D_{2,3}, D_{2,4}, D_{3,4}\}.
\end{align}
Each file $W_{\mathcal{T}}$ is cached at user $k$ if $k \in \mathcal{T}$. Let us assume that users $1-4$ request files $A-D$, respectively.

In this example, we {assume} users $1$, $2$, and $3$ are close to each other, while user $4$ is far from them as illustrated in Fig.~\ref{fig:sysmodel_complex}. 
Then, {during this phase, the first three users (collected in $\mathcal{D}=\{1,2,3\}$) locally exchange content in three orthogonal D2D transmissions}.
Following~\cite{Ji2016}, each subfile is further divided into $|\mathcal{D}(t)|-1 = 2$ fragments, discriminated by their superscript indices. Then, in the first D2D transmission of length $T\left(1\rightarrow \mathcal{R}^{\mathcal{D}}(1)\right)$ seconds, user $1$ multicasts $X^{D2D}_1= B^{1}_{1,3} \oplus C^{1}_{1,2}$ to $\mathcal{R}^{\mathcal{D}}(1)=\{2,3\}$. In the second D2D transmission, user $2$ transmits $X^{D2D}_2= A^{1}_{2,3} \oplus C^{2}_{1,2}$ to $\mathcal{R}^{\mathcal{D}}(2)=\{1,3\}$, which will take $T\left(2\rightarrow \mathcal{R}^{\mathcal{D}}(2)\right)$ seconds. Finally, in the third D2D transmission of length $T\left(3\rightarrow \mathcal{R}^{\mathcal{D}}(3)\right)$ seconds, user $3$ transmits $X^{D2D}_3= A^{2}_{2,3} \oplus B^{2}_{1,3}$ to $\mathcal{R}^{\mathcal{D}}(3)=\{1,2\}$. 
These transmissions require the total time of
    $T_{D2D} = T\left(1\rightarrow \mathcal{R}^{\mathcal{D}}(1)\right)+T\left(2\rightarrow \mathcal{R}^{\mathcal{D}}(2)\right)+
    T\left(3\rightarrow \mathcal{R}^{\mathcal{D}}(3)\right)$,
in which
\begin{math}
T\left(i\rightarrow \mathcal{R}^{\mathcal{D}}(i)\right) = \frac{F/12}{R^{\mathcal{D}}_i}, \quad i= {1,2,3}
\end{math}
and $R^{\mathcal{D}}_i, i={1,2,3}$ is given in~\eqref{eq:D2D_multicast_rate}.

In the DL phase, the BS transmits a message comprised of the remaining subfiles
$\mathbf{x}_{DL}=\tilde{X}_{1,2,4} \mathbf{w}_{1,2,4}+\tilde{X}_{1,3,4} \mathbf{w}_{1,3,4}+\tilde{X}_{2,3,4} \mathbf{w}_{2,3,4},$
where ${X}_{1,2,4}=A_{2,4} \oplus B_{1,4} \oplus D_{1,2}$, ${X}_{1,3,4}=A_{3,4} \oplus C_{1,4} \oplus D_{1,3}$, and ${X}_{2,3,4}=B_{3,4} \oplus C_{2,4} \oplus D_{2,3}$\footnote{For convenience, the superscript $\mathcal{S} = \{ 1,2,3,4 \}$ in $\mathbf{w}_{ \mathcal{T}}^{\mathcal{S}} \tilde{X}_{ \mathcal{T}}^{\mathcal{S}}$ has been omitted in this example.}, and $\tilde{X}_{\mathcal{T}}$ is the modulated version of $X_{\mathcal{T}}$. At the end of this phase, user $1$ is interested in decoding $\{X_{1,2,4}$, $X_{1,3,4}\}$, user $2$ is interested in decoding $\{X_{1,2,4}$, $X_{2,3,4}\}$, user $3$ is interested in decoding $\{X_{1,3,4}$, $X_{2,3,4}\}$, and, user $4$ is interested in decoding all the three terms $\{X_{1,2,4}, X_{1,3,4}$, $X_{2,3,4}\}$. Thus, from the perspective of users $1$, $2$, and $3$, {there exists a} {MAC} channel with two useful terms and one interference term. However, from the perspective of the user $4$, there exists a {MAC} channel with three useful terms. In this regard, for users $1$, $2$, and $3$ the {MAC} rate region {is given as}
$R^k_\text{MAC}=\min (\frac{1}{2}R^k_\text{sum},  R^k_1,  R^k_2), \ k=1,2,3$.
For instance, for user $1$, we have
    $R^1_1 =\log\left(1+\frac{|\mathbf{h}_1^{\mathrm{H}}\mathbf{w}_{1,2,4}|^2}{|\mathbf{h}_1^{\mathrm{H}}\mathbf{w}_{2,3,4}|^2 + N_0}\right), 
    R^1_2 =\log\left(1+\frac{|\mathbf{h}_1^{\mathrm{H}}\mathbf{w}_{1,3,4}|^2}{|\mathbf{h}_1^{\mathrm{H}}\mathbf{w}_{2,3,4}|^2 + N_0}\right)$,
    and
    $R^1_\text{sum}= \log \left(1+\frac{|\mathbf{h}_1^{\mathrm{H}}\mathbf{w}_{1,2,4}|^2+|\mathbf{h}_1^{\mathrm{H}}\mathbf{w}_{1,3,4}|^2}{|\mathbf{h}_1^{\mathrm{H}}\mathbf{w}_{2,3,4}|^2 + N_0}\right)$.

To derive the fourth user's rate region, we face a MAC with three messages. Thus, we have $7$ MAC region inequalities, which will result in $R^4_\text{MAC}$. Then, the corresponding rate constraints for the {MAC} channel are listed below
\begin{equation} \nonumber \scriptsize
    \begin{aligned}
         &R^4_1=\log\left(1+\frac{|\mathbf{h}_4^{\mathrm{H}}\mathbf{w}_{1,2,4}|^2}{N_0}\right), \ 
         R^4_2=\log\left(1+\frac{|\mathbf{h}_4^{\mathrm{H}}\mathbf{w}_{1,3,4}|^2}{N_0}\right), \\
         &
         R^4_3=\log\left(1+\frac{|\mathbf{h}_4^{\mathrm{H}}\mathbf{w}_{2,3,4}|^2}{N_0}\right), \\ &R^4_{1,2}=\log\left(1+\frac{|\mathbf{h}_4^{\mathrm{H}}\mathbf{w}_{1,2,4}|^2 + |\mathbf{h}_4^{\mathrm{H}}\mathbf{w}_{1,3,4}|^2}{N_0}\right), \\
         & 
         R^4_{1,3}=\log\left(1+\frac{|\mathbf{h}_4^{\mathrm{H}}\mathbf{w}_{1,2,4}|^2 + |\mathbf{h}_4^{\mathrm{H}}\mathbf{w}_{2,3,4}|^2}{N_0}\right), \\
         &R^4_{2,3}=\log\left(1+\frac{|\mathbf{h}_4^{\mathrm{H}}\mathbf{w}_{1,3,4}|^2 + |\mathbf{h}_4^{\mathrm{H}}\mathbf{w}_{2,3,4}|^2}{N_0}\right), \\
         & 
         \\ &R^4_{1,2,3}=\log\left(1+\frac{|\mathbf{h}_4^{\mathrm{H}}\mathbf{w}_{1,2,4}|^2 + |\mathbf{h}_4^{\mathrm{H}}\mathbf{w}_{1,3,4}|^2 + |\mathbf{h}_4^{\mathrm{H}}\mathbf{w}_{2,3,4}|^2}{N_0}\right).
    \end{aligned}
\end{equation}
Thus, the {MAC} rate region for user $4$ is expressed as follows
\begin{equation} \nonumber
R^4_\text{MAC}=\min (\frac{1}{3}R^4_\text{1,2,3},\frac{1}{2}R^4_\text{1,2},\frac{1}{2}R^4_\text{1,3}, \frac{1}{2}R^4_\text{2,3},  R^4_1,  R^4_2,R^4_3).
\end{equation}
When all the MAC inequalities for all the users are considered together, {the common multicast rate is driven as follows}
\begin{equation} \small
\begin{array} {l}
	\label{prob:dlprob_ex2}
		\underset{\substack{\mathbf{w}_{i,j,l}, \gamma^k_m, r }}{\max}  
	
        r
	\\
	 \mathrm{subject  \ to} \\ 
      r \leq \frac{1}{2}\log(1 + \gamma^k_1 + \gamma^k_2),\ k=1,2,3, \\ 
      r \leq \log(1 + \gamma^k_m),\ k=1,2,3, m = 1,2, \\
     r \leq \frac{1}{3}\log(1 + \gamma^4_1 + \gamma^4_2+\gamma^4_3),  \ 
      r \leq \frac{1}{2}\log(1 + \gamma^4_1 + \gamma^4_2) ,\\ 
     r \leq \frac{1}{2}\log(1 + \gamma^4_1 + \gamma^4_3), \ 
     r \leq \frac{1}{2}\log(1 + \gamma^4_2 + \gamma^4_3), \\ 
      r \leq \log(1 + \gamma^4_m),\ m = 1,2,3 \\
     \gamma^1_1 \leq \frac{|\mathbf{h}_1^{\mathrm{H}}\mathbf{w}_{1,2,4}|^2}{|\mathbf{h}_1^{\mathrm{H}}\mathbf{w}_{2,3,4}|^2 + N_0}, 
         \gamma^1_2 \leq \frac{|\mathbf{h}_1^{\mathrm{H}}\mathbf{w}_{1,3,4}|^2}{|\mathbf{h}_1^{\mathrm{H}}\mathbf{w}_{2,3,4}|^2 + N_0} ,
    \\
     \gamma^2_1 \leq \frac{|\mathbf{h}_2^{\mathrm{H}}\mathbf{w}_{1,2,4}|^2}{|\mathbf{h}_2^{\mathrm{H}}\mathbf{w}_{1,3,4}|^2 + N_0}, 
         \gamma^2_2 \leq \frac{|\mathbf{h}_2^{\mathrm{H}}\mathbf{w}_{2,3,4}|^2}{|\mathbf{h}_2^{\mathrm{H}}\mathbf{w}_{1,3,4}|^2 + N_0} ,
    \\
     \gamma^3_1 \leq \frac{|\mathbf{h}_3^{\mathrm{H}}\mathbf{w}_{1,3,4}|^2}{|\mathbf{h}_3^{\mathrm{H}}\mathbf{w}_{1,2,4}|^2 + N_0}, 
         \gamma^3_2 \leq \frac{|\mathbf{h}_3^{\mathrm{H}}\mathbf{w}_{2,3,4}|^2}{|\mathbf{h}_3^{\mathrm{H}}\mathbf{w}_{1,2,4}|^2 + N_0} ,
    \\
     \gamma^4_1 \leq |\mathbf{h}_4^{\mathrm{H}}\mathbf{w}_{1,2,4}|^2/N_0, 
         \gamma^4_2 \leq |\mathbf{h}_4^{\mathrm{H}}\mathbf{w}_{1,3,4}|^2/N_0, \\
    
         \gamma^4_3 \leq |\mathbf{h}_4^{\mathrm{H}}\mathbf{w}_{2,3,4}|^2/N_0, \\
	  
        \|\mathbf{w}_{1,2,4}\|^2 + \|\mathbf{w}_{1,3,4}\|^2 + \|\mathbf{w}_{2,3,4}\|^2 \leq P_T
		\text{.}
\end{array}
\end{equation}
Finally, the delivery time of the DL phase is
    $T_\text{DL} = \frac{F/6}{r}$.

It should be noted that compared to~\cite{Tolli-Shariatpanahi-Kaleva-Khalaj-Arxiv18}, one term is removed from the DL transmission {herein}, i.e., $\tilde{X}_{1,2,3}\mathbf{w}_{1,2,3}$. We have taken care of this term in the D2D phase, {enhancing the performance of the DL phase} for two reasons. First, since we removed one term from DL transmission, {the transmit power is shared by fewer beamformers. Second, since one term is removed, the number of conditions in the optimization problem is less than}~\cite{Tolli-Shariatpanahi-Kaleva-Khalaj-Arxiv18}. This will reduce the complexity of the optimization problem 
as discussed in Sec.~\ref{sec:complexity_analyze}.

\subsection{Example 2: D2D group sizes less than $\tau+1$} \label{sec:example-2_general_group_size}
So far, based on the {scheme} proposed in~\cite{Ji2016}, we have considered D2D group size $|\mathcal{D}| = \tau+1 = 3$. However, {given the scenario presented in Fig.~\ref{fig:sysmodel_complex}, there are still some useful contents in the cache of the three users, which can be shared among them in D2D groups of size} $|\mathcal{D}| = 2$, i.e., $\mathcal{D}(1)= \{1,2\}, \ \mathcal{D}(2)= \{1,3\}, \ \mathcal{D}(3)= \{2,3\}$. In this regard, user $1$ transmits $B_{1,4} \ \text{and} \ C_{1,4}$ to users $2$ and $3$, user $2$ transmits $A_{2,4} \ \text{and} \ C_{2,4}$ to users $1$ and $3$, and user $3$ transmits $A_{3,4}$ and $B_{3,4}$ to users $1$ and $2$, respectively. Therefore, compared to section~\ref{sec:Example2}, six more time slots are needed in the D2D phase in this case, where the transmission scheme is similar to section~\ref{sec:Example1}. Then, since user $4$ has not received any data in the D2D phase, it still needs to receive all three missing subfiles through DL transmission. On the other hand, users $1$, $2$, and $3$ have received {all the missing} data through D2D transmissions. Thus, the downlink transmission is changed to $\mathbf{x}_{DL}= \tilde{X}_{1,2,4} \mathbf{w}_{1,2,4}+\tilde{X}_{1,3,4} \mathbf{w}_{1,3,4}+\tilde{X}_{2,3,4} \mathbf{w}_{2,3,4}$, where $\tilde{X}_{1,2,4}= D_{1,2}$, $\tilde{X}_{1,3,4}=D_{1,3}$, and $\tilde{X}_{2,3,4}=D_{2,3}$. The rate expression can be formulated similar to~\eqref{prob:dlprob_ex2} with seven MAC conditions comprised of $\gamma^{4}_{1}, \ \gamma^{4}_{2}, \ \text{and} \ \gamma^{4}_{3}$. However, {since the BS only serves one user in this special case, the DL phase can be simplified to unicast transmission.} To this end, the DL message can be expressed as $\hat{\mathbf{x}}_{DL} ={\mathbf{w}}_4\tilde{X}_4 $, where $\tilde{X}_4 = [D_{1,2}, D_{1,3}, D_{2,3}]$ is the concatenated version of the three missing parts with the total size of $\frac{F}{2}$ bits. Then, the corresponding beamformer can be expressed as a maximum ratio transmitter (MRT), i.e., ${\mathbf{w}}^{*}_{4} = \frac{{\mathbf{h}}_{4}P_{T}}{|{\mathbf{h}}_4|}$. Note that, though the number of messages in the $\mathbf{x}_{DL}$ (in this case) remains the same as in Section~\ref{sec:Example2}, the complexity of the beamformer design is greatly reduced. As a result, the rate for DL transmission is further enhanced, which in turn leads to a potential reduction of the total transmission time.

\section{D2D Aided Beamforming: The General Case}
\label{sec:general}
In this section,  we formulate and analyze the proposed scheme in the general setting. We first consider D2D group size $|\mathcal{D}|=\tau +1$, then in section~\ref{sec: D2D-general-groupsize-LTP1}, we extend the results to group sizes smaller than $\tau +1$.
The cache placement phase is identical to the one proposed in \cite{MaddahAli-2014}. In general, similar to~\cite{Tolli-Shariatpanahi-Kaleva-Khalaj-Arxiv18}, $\min(\tau+L, K)$ users are served in each data transmission. However, {in our proposed scheme, the data is delivered over separate D2D and DL phases.}

{Before the delivery process, an exhaustive search among the D2D subsets is required to find the potential D2D groups for the first phase}. However, since there are $\binom{\tau+L}{\tau+1}$ different D2D subsets (of size $\tau+1$) among $\tau+L$ number of users, the exhaustive search would require $2^{\binom{\tau+L}{\tau+1}}$ evaluations of~\eqref{eq:total_rate}. Moreover, all the beamformers must be solved in each of these evaluations, and the total rate must be computed. Then, the case with the lowest delivery time is selected. However, this is practically infeasible due {to the significant} overhead. Thus, to reduce the computational burden, a less complex heuristic mode selection method {is} introduced in Section~\ref{section: general-partB}. To simplify the notation, we consider an indicator function $I_{D2D}(\mathcal{D})$, which specifies whether the corresponding subset has been allocated for D2D transmission. Moreover,
we denote $C(K,\tau,L)=\frac{F}{\binom{K}{\tau}\binom{K-(\tau+1)}{L-1}}$ as the size of the transmitted subfiles~\cite{Tolli-Shariatpanahi-Kaleva-Khalaj-Arxiv18}.
{\subsection{Total delivery time $T_{\mathrm{D2D}}+T_{\mathrm{DL}}$}\label{sec:delivery_time_general} \label{sec: general-parta}
{First}, the D2D delivery time {for a given selection of D2D subsets} is given as  
\begin{align}
    &T_{\mathrm{D2D}}  = \sum_{\mathcal{D} \subseteq \overline{\Omega^{\mathcal{S}}}}\sum_{k \in \mathcal{D}}\frac{\frac{C(K,\tau,L)}{(|\mathcal{D}|-1)}}{{R}^{\mathcal{D}}_{k}}\label{Eq:pd2d},
\end{align}
where}
\begin{math} 
    \overline{\Omega^{\mathcal{S}}} := \{\mathcal{D} \subseteq \mathcal{S}, |\mathcal{D}|=\tau+1, I_{\mathrm{D2D}}(\mathcal{D})=1\}
\end{math}, and ${R}^{\mathcal{D}}_{k}$ is given in~\eqref{eq:D2D_multicast_rate}.
Since in each D2D subset, each user transmit a $\frac{1}{|\mathcal{D}|-1}$ fraction of a subfile, the corresponding data size in each D2D transmission is $\frac{C(K,\tau,L)}{(|\mathcal{D}|-1)}$ (see section~\ref{sec:Example2}).

{Next}, the DL beamforming is done {using the SCA approach proposed in}~\cite{Tolli-Shariatpanahi-Kaleva-Khalaj-Arxiv18}. Compared to~\cite{Tolli-Shariatpanahi-Kaleva-Khalaj-Arxiv18}, we {do} not consider all the $\tau+1$ subsets. Here, subsets $\mathcal{D}$ for which $I_{\mathrm{D2D}}(\mathcal{D})=0$ is {considered} in the DL phase, {reducing} the interference among parallel streams significantly. The DL phase throughput is given by

\begin{align} \label{Eq:MulticastRate} 
&R_C\left(\mathcal{S},\{\mathbf{w}_\mathcal{D}^\mathcal{S}, \mathcal{D} \subseteq \mathcal{S}, |\mathcal{D}|=\tau+1, I_{D2D}(\mathcal{D})=0\}\right)= \nonumber\\
& \quad \quad \max_{\{\mathbf{w}_\mathcal{D}^\mathcal{S}\}}\min_{k \in \mathcal{S}} R^k_{MAC}\left(\mathcal{S},\{\mathbf{w}_\mathcal{D}^\mathcal{S}, \mathcal{D} \subseteq \mathcal{S}, 
I_{\mathrm{D2D}}(\mathcal{D})=0\}\right),
	\end{align}
and
\begin{align}\label{Eq:MAC_general}
	&R^k_{MAC}\left(\mathcal{S},\{\mathbf{w}_\mathcal{D}^\mathcal{S}, \mathcal{D} \subseteq \mathcal{S}, 
	I_{\mathrm{D2D}}(\mathcal{D})=0\}\right) \nonumber \\ 
	= &\min_{\mathcal{B} \subseteq \Omega_k^{\mathcal{S}}} \left[\frac{1}{|\mathcal{B}|}\log\left(1+\frac{\sum_{\mathcal{D} \in \mathcal{B}} |\mathbf{h}_k^{\mathrm{H}} \mathbf{w}_\mathcal{D}^\mathcal{S}|^2}{N_0+\sum_{\mathcal{D'} \in \mathcal{I}_{k}} |\mathbf{h}_k^{\mathrm{H}} \mathbf{w}_\mathcal{D'}^\mathcal{S}|^2}\right) \right],
		\end{align}
		where 
$\mathcal{I}_{k} = \Omega^{\mathcal{S}} \setminus \Omega_k^{\mathcal{S}} := \{\mathcal{D} \subseteq \mathcal{S} : |\mathcal{D}|=\tau+1, \ I_{\mathrm{D2D}}(\mathcal{D})=0 \ | \ k \notin \mathcal{D}\}$ is the set of {interfering messages} at user $k$. Denote $\Omega^{\mathcal{S}} := \{\mathcal{D} \subseteq \mathcal{S}, |\mathcal{D}|=\tau+1, I_{\mathrm{D2D}}(\mathcal{D})=0\}$ as the set of all the user subsets (of size $\tau+1$) that will be served in the DL phase\footnote{In Example 2, $\Omega^{\mathcal{S}}= \{\{1,2,4\}, \{1,3,4\}, \{2,3,4\}\}$ and $|\Omega^{\mathcal{S}}|=3$.}, where the cardinality $|\Omega^{\mathcal{S}}|$ indicates the total number of messages delivered by the BS. Finally, let
$\Omega_k^{\mathcal{S}} := \{\mathcal{D} \subseteq \mathcal{S}, |\mathcal{D}|=\tau+1, I_{\mathrm{D2D}}(\mathcal{D})=0 \ | \ k \in \mathcal{D}\}$ denote the set of all the subsets in which user $k$ exists (i.e., the set of all the messages required by user $k$).

After computing the rate for the DL phase, $T_\mathrm{DL}$ is given as $T_\mathrm{DL}=\frac{C(K,\tau,L)}{R_C}$, while the symmetric delivery rate is given in~\eqref{eq:total_rate}. 
Note that solving~\eqref{Eq:MulticastRate} requires considerable computation due to the iterative convex approximation process and many constraints. Moreover, the exhaustive search would require computing~\eqref{Eq:pd2d} and~\eqref{Eq:MulticastRate} for each D2D subset evaluation. Therefore, considering the total number of different D2D mode allocations (i.e., different combinations of subsets) and {the complexity of} computing~\eqref{Eq:MulticastRate} for each of these modes, the exhaustive search becomes impractical for large networks. Therefore, {in the following}, we provide a low-complexity heuristic solution for the proposed mode assessment problem.
\subsection{Heuristic D2D mode selection with low complexity} \label{section: general-partB}
To decrease the computational load of evaluating ${T}_{\mathrm{D2D}}$ and ${T}_{\mathrm{DL}}$ for different D2D mode allocations, we provide a throughput approximation for the D2D mode allocations without relying on the general SCA solution for the DL beamformer design.
On the one hand, due to {the} orthogonal D2D and DL phases, each D2D content exchange between users adds extra time for delivering the content locally. 
On the other hand, each successful D2D transmission reduces the remaining number of subfiles to be transmitted by the BS. Therefore, there are fewer multicast messages and corresponding beamforming vectors $\mathbf{w}_{\mathcal{D}}^{\mathcal{S}}$ in the DL optimization problem. As a result, a more efficient (less constrained) multicast beamformer design is possible, reducing the DL phase duration $T_{\mathrm{DL}}$. Therefore, the D2D mode selection is iteratively carried out as long as the following condition holds: 
\begin{equation} 
\label{eq:therishold}
  \frac{\hat{T}^{i}_{\mathrm{DL}}}{N_{s}^{i}} \geq \hat{T}_{\mathrm{D2D}}^i, \quad i\in\left[1, \ \binom{\tau+L}{\tau+1}\right],
\end{equation}
where $N_{s}^{i} = (\tau+1)\Big(\binom{\tau+L}{\tau+1}-(i-1)\Big)$ {is the number of subfiles that is delivered in the DL phase assuming $i$ D2D subsets are selected.} Moreover, $\hat{T}^{i}_{\mathrm{DL}}$ and $\hat{T}_{\mathrm{D2D}}^i$ are the coarse approximated delivery times in the $i^\text{th}$ iteration.


In~\eqref{eq:therishold}, we check if any D2D user subset will reduce the DL duration $T_{\mathrm{DL}}$ more than the {added} D2D duration. Note that in each D2D time slot, $|\mathcal{D}| = \tau+1$ subfiles are delivered through $\tau+1$ orthogonal D2D transmissions. On the other hand, all the remaining subfiles (i.e., $N^{i}_{s}$ subfiles) are delivered simultaneously in the DL phase. Thus, in~\eqref{eq:therishold}, the average delivery time for a single subfile in the D2D and DL phases is compared. To this end, we divide the total approximated DL time ($\hat{T}^{i}_{\mathrm{DL}}$) by the number of subfiles ($N_{s}^{i}$), approximating the average transmission time for a single subfile. Thus, the average D2D transmission time {for a} subfile must be less than the corresponding average DL transmission time. In each iteration, we {set a subset as a D2D candidate}, i.e., the subset which provides the lowest delivery time. {If a specific subset $\mathcal{D}$ in iteration $i$ satisfies~\eqref{eq:therishold}, we set $I_{\mathrm{D2D}}(\mathcal{D})=1$ for this subset, and the D2D transmission is done as in~\cite{Ji2016}}. If~\eqref{eq:therishold} does not hold at any specific iteration, using more D2D transmissions will not improve the rate, and the iterative process is terminated. Therefore, at most, $\binom{\tau+L}{\tau+1}$ iterations are required for the proposed iterative process, while $2^{\binom{\tau+L}{\tau+1}}$ iterations are needed for the exhaustive search.
 
 The D2D delivery time is coarsely approximated as 
\begin{align}\label{eq:Tselection}
    &\hat{T}_{\mathrm{D2D}}^i  = \min_{\mathcal{D} \subseteq {\Omega^{\mathcal{S}}}}\frac{1}{|\mathcal{D}|}\sum_{k \in \mathcal{D}}\frac{\frac{C(K,\tau,L)}{|\mathcal{D}|-1}}{R^{\mathcal{D}}_{k}}.
\end{align} 
In each D2D transmission (see Fig. \ref{fig:timeDiv}), $\frac{1}{|\mathcal{D}|-1}$ fractions of each subfile (of size $C(K,\tau,L)$) are delivered by user $k \in \mathcal{D}$ at the rate $R^{\mathcal{D}}_{k}$. Moreover, in each D2D subset, $|\mathcal{D}|=\tau+1$ subfiles are delivered. Thus, in \eqref{eq:Tselection}, the total required time is divided by $|\mathcal{D}|$ to compute the average {delivery} time for a single subfile. Note that for each iteration $i$, we only consider those subsets that have not yet been allocated for D2D transmissions.
 
To approximate $T_\text{DL}$, we make the following assumptions. First, we assume the beamformer $\mathbf{w}_\mathcal{D}^\mathcal{S}$ can remove all the interference for user $k \in \mathcal{D}$. Thus, we may assume $\mathcal{I}_{k} = \emptyset$, $\forall k \in \mathcal{S}$ in \eqref{Eq:MAC_general}. Second, we assume the beamformer $\mathbf{w}_\mathcal{D}^\mathcal{S}$ in \eqref{Eq:MAC_general} is matched to the channels of all the users in subset $\mathcal{D}$, i.e., the received SNR for $\hat{X}_\mathcal{D}^\mathcal{S}$ at receiver $k$ is $\text{SNR}^{k}_{\mathcal{D}}=\frac{\lVert \mathbf{h}_k \rVert^2\text{P}_{\mathcal{D}}}{N_{0}}$. Note that the beamformer  $\mathbf{w}_\mathcal{D}^\mathcal{S}$ is designed such that all the users in subset $\mathcal{D}$ can decode the message $\hat{X}_\mathcal{D}^\mathcal{S}$. Thus, to reflect this, we make use of the user’s channel gain for the heuristic mode selection process and limit the rate to the weakest user.\footnote{Another interpretation for \eqref{eq:aproximatedl} is that the beamformer $\mathbf{w}_\mathcal{D}^\mathcal{S}$ is assumed to be matched to the weakest user in subset $\mathcal{D}$ without rate loss for other users with better channel condition in the subset.}
Thus, the DL delivery time is coarsely approximated as
\begin{align} \label{eq:aproximatedl}
    &\hat{T}^{i}_{\mathrm{DL}} = \frac{C(K,\tau,L)}{{\min_{k\in [\mathcal{S}]}}
    \frac{1}{|\Omega_k^{\mathcal{S}}|}\log\left(1+\frac{\lVert \mathbf{h}_k \rVert^2 }{N_{0}}\sum\limits_{\mathcal{D}\subseteq{\Omega^{\mathcal{S}}_{k}}} \text{P}_{\mathcal{D}} \right)},
\end{align}
where $\text{P}_{\mathcal{D}}$ is the dedicated power to the message $\hat{X}_\mathcal{D}^\mathcal{S}$. Denote $\hat{R}^{i}_{k} \triangleq \frac{1}{|\Omega_k^{\mathcal{S}}|}\log\left(1+\frac{\lVert \mathbf{h}_k \rVert^2 }{N_{0}}\sum\limits_{\mathcal{D}\subseteq{\Omega^{\mathcal{S}}_{k}}} \text{P}_{\mathcal{D}} \right)$ as the approximated rate of user $k$ assuming $i-1$ subsets have been chosen for D2D transmission. Note that $\hat{R}^{i}_{k}$ can be interpreted as the achievable rate of equivalent single-user {MISO} {MAC} channel with several (i.e., $|\Omega^{\mathcal{S}}_{k}|$) useful terms and no interference. 

To reflect the max-min objective in~\eqref{Eq:MulticastRate}, we assume the power {is divided among different messages such that} minimum received SNR for any two different messages are equal, 
i.e.,
    $\min_{i\in \mathcal{U}} \lVert \mathbf{h}_i \rVert^2\frac{\text{P}_{\mathcal{U}}}{N_0} = \min_{j\in \mathcal{D}} \lVert \mathbf{h}_j \rVert^2\frac{\text{P}_{\mathcal{D}}}{N_0}, \ \forall  \{\mathcal{D}, \ \mathcal{U}\}\in {\Omega^{\mathcal{S}}}, \ \mathcal{D} \neq \mathcal{U}$.
Accordingly, the closed-form solution for $P_{\mathcal{D}}$ is given as follows
\begin{align} \label{eq:GeneralPower}
   \text{P}_{\mathcal{D}} = \frac{\prod\limits_{\mathcal{U}\subseteq{\Omega^{\mathcal{S}}}/\mathcal{D}} \min_{k\in \mathcal{U}} \lVert \mathbf{h}_k \rVert^2 P_T}{\sum\limits_{\mathcal{V}\subseteq{\Omega^{\mathcal{S}}}} \prod\limits_{\mathcal{U}\subseteq{\Omega^{\mathcal{S}}}/\mathcal{V}} \min_{i\in \mathcal{U}} \lVert \mathbf{h}_i \rVert^2}, \quad \forall \mathcal{D} \in \Omega^{\mathcal{S}}. 
\end{align}
It is worth mentioning that when users experience similar channel conditions, the power allocated to each message can be assumed to be almost equal; therefore, \eqref{eq:aproximatedl} can be simplified to the approximated DL time in \cite{Mahmoodi-etal-Arxiv19}, i.e.,
\begin{equation}\nonumber
    \hat{T}^{i}_{\mathrm{DL}} \sim \frac{C(K,\tau,L)}{{\min_{k\in [\mathcal{S}]}}
    \frac{1}{|\Omega_k^\mathcal{S}|}\log\left(1+\frac{|\Omega_k^\mathcal{S}|\lVert \mathbf{h}_k \rVert^2P_T}{|\Omega^\mathcal{S}|N_0} \right)}.
\end{equation}

Once the user subsets for the D2D phase are selected, the final delivery time/rate is computed as in Section~\ref{sec:delivery_time_general}.
Note that when $I_{D2D}(\mathcal{D})=1$, the coded message corresponding to subset $\mathcal{D}$ is already delivered in the D2D phase. Thus, we can ignore such a subset in the DL phase, resulting in less inter-message interference and lower DL delivery time than~\cite{Tolli-Shariatpanahi-Kaleva-Khalaj-Arxiv18}.
Finally, the complete algorithm for the proposed two-phase delivery scheme is given in Algorithm~\ref{Alg_Main}.
{\begin{remark}
    The proposed D2D/DL mode selection in Algorithm~\ref{Alg_Main} is based on instantaneous channel knowledge and does not need any previous data history to approximate $\hat{T}_{\text{D2D}}$ or  $\hat{T}_{\text{DL}}$. However, the time approximation proposed in this work can be further improved by collecting statistics about users' channel conditions over a period and applying ML tools to approximate the D2D and DL transmission times. 
\end{remark}}
\begin{algorithm*}
    \small
	\caption{D2D Assisted Multi-Antenna Coded Caching\label{Alg_Main}}
	\begin{algorithmic}
		
		\Procedure{DELIVERY}{$W_1,\dots,W_N$, $d_1,\dots,d_K$, $\mathbf{H} = [\mathbf{h}_1, \dots, \mathbf{h}_K]$}
		\State $\tau \gets MK/N$
		

		\For{$i\in\big[1, \binom{\tau+L}{\tau+1}\big]$} 
		\If{$\frac{\hat{T}^{i}_{\mathrm{DL}}}{N^{i}_{s}} \geq \hat{T}_{\mathrm{D2D}}^i$}
		\\
		\ForAll{$k \in \mathcal{D}$} 
		\\
		\State Each sub-file is divided into $\tau$ mini-file fragments.
		\State $X^\mathcal{D}_k\gets \oplus_{i \in \mathcal{D} \backslash \{k\}}  NEW(W_{{d_i},\mathcal{D}\backslash\{i\}}) $
		\State User $k$ multicasts $X^\mathcal{D}_k$ to $\mathcal{R}^{\mathcal{D}}(k)=\mathcal{D} \backslash \{k\}$ with the rate $R^{\mathcal{D}}_k$ stated in \eqref{eq:D2D_multicast_rate}
		\State $I_{D2D}(\mathcal{D})=1 $
		\EndFor	
		
				\EndIf
				\\
		\EndFor	
		
		\ForAll{$\mathcal{S} \subseteq [K], |\mathcal{S}|=\min(\tau+L,K)$} 

		\ForAll{$\mathcal{D} \subseteq \mathcal{S}, |\mathcal{D}|=\tau+1, I_{D2D}(\mathcal{D})=0$}

		\State $X_\mathcal{D}^\mathcal{S} \gets \oplus_{k \in \mathcal{D}}  NEW(W_{{d_k},\mathcal{D}\backslash\{k\}}) $
		\EndFor
		\State $\{\mathbf{w}_\mathcal{D}^\mathcal{S}\}=\arg \!\!\!\!\!\!\!\!\!\!\!\!\!\! \max\limits_{\{\mathbf{\underline{w}}_\mathcal{D}^\mathcal{S}, \mathcal{D} \subseteq \mathcal{S}, |\mathcal{D}|=\tau+1, I_{D2D}(\mathcal{D})=0\}} \!\!\!\!\!\!\!\!\!\!\!\!\!\! R_C\left(\mathcal{S},\{\underline{\mathbf{w}}_\mathcal{D}^\mathcal{S}, \mathcal{D} \subseteq \mathcal{S}, |\mathcal{D}|=\tau+1, I_{D2D}(\mathcal{D})=0\}\right)$ 
		\State $\underline{\mathbf{X}}(\mathcal{S}) \gets \sum_{\mathcal{D} \subseteq \mathcal{S}, |\mathcal{D}|=\tau+1, I_{D2D}(\mathcal{D})=0} \mathbf{w}_\mathcal{D}^\mathcal{S} \tilde{X}_\mathcal{D}^\mathcal{S}$

		\State \textbf{transmit} $\underline{\mathbf{X}}(\mathcal{S})$ with the rate $R_C(\mathcal{S},\{\mathbf{w}_\mathcal{D}^\mathcal{S}, \mathcal{D} \subseteq \mathcal{S}, |\mathcal{D}|=\tau+1, I_{D2D}(\mathcal{D})=0\})$.

		\EndFor 
		\EndProcedure
		
	\end{algorithmic}
\end{algorithm*}
\subsection{Heuristic D2D mode selection for restricted DoF}\label{sec:restrictedDoF}
The proposed iterative D2D mode selection can be extended to the system setting with restricted DoF~\cite{Tolli-Shariatpanahi-Kaleva-Khalaj-Arxiv18}. The authors in~\cite{Tolli-Shariatpanahi-Kaleva-Khalaj-Arxiv18} propose limiting the DoF by serving $\tau+\alpha$ ($\alpha \leq L$) users at each transmission phase, resulting in a less complex beamformer design. Furthermore, they divide the users into $P$ distinct groups (for some $P \in \mathbb{N}$) to decrease the number of overlapping groups (c.f.~\cite{Tolli-Shariatpanahi-Kaleva-Khalaj-Arxiv18}). The combination of D2D and DL transmissions proposed in this paper also applies to the $\alpha < L$ case. The difference is that the total number of different D2D subsets changes from $\binom{\tau+L}{\tau+1}$ to $P\binom{\tau+\beta}{\tau+1},  ${where $\beta\triangleq\frac{\tau+\alpha}{P}-\tau$ is an integer}. Accordingly, $\Omega^{{\mathcal{S}}}$,  $\Omega^{{\mathcal{S}}}_{k}$, and $C(K,\tau,L)$ change to the ones defined in~\cite{Tolli-Shariatpanahi-Kaleva-Khalaj-Arxiv18}, and the process remains the same. This paper is a particular case of the system proposed in~\cite{Tolli-Shariatpanahi-Kaleva-Khalaj-Arxiv18} where $P=1, \text{and} \  \alpha= \beta = L$.
\subsection{D2D aided beamforming for general group sizes} \label{sec: D2D-general-groupsize-LTP1}
In this section, we extend the results in sections~\ref{sec: general-parta} and \ref{section: general-partB} to general D2D group sizes. 
Let us define a new set $\overline{\Omega^{\mathcal{S}, \mathcal{V}}} := \{\mathcal{V} \subseteq \mathcal{S}, 2 \leq |\mathcal{V}| \leq \tau+1, I_{\text{D2D}}(\mathcal{V}) = 1 \}$ as the set of D2D groups (of any size) selected for D2D phase. 
Now, for a given D2D subset selection, the $T_{\mathrm{D2D}}$ is computed as the following
\begin{align}
    &T_{\mathrm{D2D}}  = \sum_{\mathcal{V} \subseteq \overline{\Omega^{\mathcal{S},\mathcal{V}}}}\sum_{k \in \mathcal{V}}\frac{\frac{a^{\mathcal{V}}_{k}C(K,\tau,L)}{(|\mathcal{V}|-1)}}{{R}^{\mathcal{V}}_{k}}\label{Eq:td2dv},
\end{align}
where $a^{\mathcal{V}}_{k}$ is the number of transmitted messages by user $k \in \mathcal{V}$. Please note that though $a^{\mathcal{V}}_{k} = 1$ for $|\mathcal{V}| = \tau+1$, $a^{\mathcal{V}}_{k}$ can be any number for $|\mathcal{V}| < \tau+1$. The corresponding DL rate ($R_C$) for the general group sizes is computed using \eqref{Eq:MulticastRate}. 
The heuristic mode selection criteria~\eqref{eq:therishold} changes as follows
\begin{equation} 
\label{eq:general-c-theri}
  \frac{\hat{T}^{i}_{\mathrm{DL}}}{N_{v}^{i}} \geq \hat{T}_{\mathrm{D2D}}^i, \quad i\in\left[1, \ \sum_{j=2}^{\tau + 1}\binom{\tau+L}{j}\right],
\end{equation}
where $N_{v}^{i}$ is the total number of remaining subfiles delivered through DL transmission in the $i$'th iteration. Here, we first check the subsets of size $|\mathcal{V}|=\tau + 1$ using~\eqref{eq:Tselection}; after all the subsets of size $|\mathcal{V}|=\tau + 1$ are checked, we continue the procedure for $2 \leq |\mathcal{V}|< \tau +1$. We use the same approximation for $\hat{T}^{i}_{\mathrm{DL}}$ and $\hat{T}_{\mathrm{D2D}}^i$ given in~\eqref{eq:Tselection} and~\eqref{eq:aproximatedl}, respectively. 
{However}, for general group sizes, each transmitted message $\hat{X}_\mathcal{D}^\mathcal{S}$ in the DL phase may not be useful for all $ k \in \mathcal{D}$ (some users in $\mathcal{D}$ may have received the transmitted useful term in D2D mode). Thus, in~\eqref{eq:GeneralPower}, the minimum is taken over those users who still need the message.

Similar to section~\ref{section: general-partB}, the D2D subset selection is carried out as long as \eqref{eq:general-c-theri} holds. The difference is that for group size less than $\tau+1$, each user may have several contents to transmit to the other users (i.e., $a^{\mathcal{V}}_{k} \neq 1$). When a subset $\mathcal{V}$ is selected for D2D transmission, users in $\mathcal{V}$ transmit all the useful data available in their cache to the other users in the selected subset (following the method in~\cite{Ji2016}) and $I_{\mathrm{D2D}}(\mathcal{V})$ is set to one. We also set $I_{\mathrm{D2D}}(\mathcal{D})$ to one if all the subfiles in $\mathcal{D}$ are transmitted through D2D groups of size $\mathcal{V} < \tau+1$. The rest of the process is the same as in Section~\ref{section: general-partB}. After D2D subset assessment is done, the DL delivery time is computed using~\eqref{eq:aproximatedl}, 
and the D2D time is computed using~\eqref{Eq:td2dv}.

{It is worth mentioning that the main target in~\eqref{eq:therishold} and~\eqref{eq:general-c-theri} is to reduce the overall transmission time (i.e., $T_{\text{D2D}}+T_{\text{DL}}$) in~\eqref{eq:total_rate}. However, as we show in the following section, D2D transmission can also notably reduce the complexity of the beamforming process in the DL phase. To this end, we briefly illustrate the complexity involved in the beamformer design and the effect of D2D transmission on the process.}
\section{beamforming complexity analysis}
\label{sec:complexity_analyze}

In this section, we investigate the effects of D2D transmissions in computational complexity for the general case. Authors in~\cite{Tolli-Shariatpanahi-Kaleva-Khalaj-Arxiv18} show that the number of MAC conditions and quadratic terms in the SINR constraints dominates the complexity of the DL beamformer design. Thus, we first introduce two boundaries for the number of MAC conditions, then discuss the effects of D2D on the beamformer design complexity.

\begin{theorem}
Considering $i(\tau+1)+m$ subfiles are delivered {via D2D transmissions}, among which $i(\tau+1)$ subfiles are delivered through D2D groups with size $(\tau+1)$, the number of MAC conditions for the DL phase is lower bounded by:
\begin{equation} 
    \underline{M}(\tau,i,m,L) =(\tau+L-b)(2^a-1)+b(2^{a+1}-1),\label{eq:NA3}
\end{equation}
    where
\begin{align}
    a&\triangleq\left\lfloor{\frac{(\tau+1)\left({M}_{\text{T}}-i\right)-m}{\tau+L}}\right\rfloor,\quad
    \\ b&\triangleq\left(\tau+1\right)\left({M}_{\text{T}}-i\right)-m-a\left(\tau+L\right),\label{eq:NA1} \ \ 
    \\
    m &\triangleq \sum_{\mathcal{V} \in \overline{\Omega^{\mathcal{S}, \mathcal{V}}} \setminus \overline{\Omega^{\mathcal{S}}}}\sum_{k \in \mathcal{V}}a^{\mathcal{V}}_{k}, \quad {M}_{\text{T}}  \triangleq \binom{\tau+L}{\tau+1} \label{eq:tot-subfile-LTP1},
    \end{align}
    and the number of MAC conditions is upper bounded by
\begin{align}
    & \overline{M}(\tau,i,m,L) =\nonumber\\ & \quad (\tau+L-U)(2^{W}-1) + (2^{W-X}-1) \nonumber\\ & \quad \quad + (U-(\phi+1))(2^{(W-\binom{U-2}{\tau})}-1) \text{,}  
    \end{align}
    where
\begin{align}
    &U \triangleq \min_{\text{S.T.} \ \binom{U^{'}}{\tau+1} \ \geq \ i} {U^{'}}, \ X\triangleq i  -\binom{U-1}{\tau+1} 
    , \\ & \phi \triangleq \left\lfloor{\frac{m}{W- \binom{U-2}{\tau}}}\right\rfloor, \ W \triangleq \binom{\tau+L-1}{\tau}.\label{eq:NA4} 
\end{align}

\end{theorem}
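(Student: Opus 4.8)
The plan is to turn the statement into a pure counting problem and then an extremal combinatorics argument. First I would read off from \eqref{Eq:MAC_general} that the MAC rate region of user $k$ in the DL phase is cut out by one inequality per nonempty subset $\mathcal{B}\subseteq\Omega_k^{\mathcal{S},\mathcal{V}}$, so user $k$ contributes exactly $2^{n_k}-1$ MAC conditions, where $n_k\triangleq|\Omega_k^{\mathcal{S},\mathcal{V}}|$ is the number of DL messages still required by $k$; since $|\mathcal{S}|=\tau+L$, the quantity to be bounded is $\sum_{k\in\mathcal{S}}(2^{n_k}-1)$. The single global constraint on the profile $(n_k)_{k\in\mathcal{S}}$ comes from counting message--user incidences: with no D2D every user needs $W=\binom{\tau+L-1}{\tau}$ messages and the $M_{\text{T}}=\binom{\tau+L}{\tau+1}$ messages carry $(\tau+1)M_{\text{T}}=(\tau+L)W$ incidences; removing a size-$(\tau+1)$ D2D group deletes one message for each of its $\tau+1$ members, and delivering one subfile through a smaller D2D group deletes one message for one user; hence $\sum_{k\in\mathcal{S}}n_k=(\tau+1)(M_{\text{T}}-i)-m=:N$, which is exactly the numerator in \eqref{eq:NA1}, together with $0\le n_k\le W$.

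For the minimum, convexity of $x\mapsto 2^{x}$ shows that $\sum_k(2^{n_k}-1)$ subject to $\sum_k n_k=N$ is minimized by the most balanced integer profile, namely $\tau+L-b$ users with $n_k=a$ and $b$ users with $n_k=a+1$ for $a=\lfloor N/(\tau+L)\rfloor$ and $b=N-a(\tau+L)$; substituting reproduces $\text{MAC}^i_{\min}$. To obtain equality (not merely a lower bound), I would exhibit a concrete admissible selection realizing this profile: a round-robin choice of the $i$ size-$(\tau+1)$ D2D subsets that spreads the deletions as evenly as possible over $\mathcal{S}$, together with an even allocation of the $m$ small-group subfiles.

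For the maximum, the same convexity says the sum is largest when the profile is as unbalanced as the combinatorics allow, i.e.\ when the deletions are piled onto as few users as possible. Since each size-$(\tau+1)$ removal touches $\tau+1$ users at once, the $i$ removed subsets are nested into a minimal user set of size $U=\min\{U':\binom{U'}{\tau+1}\ge i\}$: one exhausts the $\binom{U-1}{\tau+1}$ subsets lying inside a fixed $(U-1)$-subset and places the remaining $X=i-\binom{U-1}{\tau+1}$ subsets so that all of them contain the $U$-th user, whence $U-1$ users lose $\binom{U-2}{\tau}$ messages each and the $U$-th loses $X$. The $m$ small-group subfiles are then spent to drive $\phi=\lfloor m/(W-\binom{U-2}{\tau})\rfloor$ of the heavily hit users down to $n_k=0$. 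The resulting profile --- $\tau+L-U$ users at $W$, $\phi$ users at $0$, $U-\phi-1$ users at $W-\binom{U-2}{\tau}$, and one user at $W-X$ --- inserted into $\sum_k(2^{n_k}-1)$ gives $\text{MAC}^i_{\max}$, and convexity certifies optimality of this extreme profile.

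The hard part will be the feasibility and bookkeeping in the maximum direction. One must check that the nested construction corresponds to genuine coded-caching D2D transmissions; that the collateral the $X$ extra subsets put on the other $U-1$ users can be kept from spoiling the stated counts by a suitable choice of which $\tau$ users from the minimal set each extra subset uses; and that the side conditions $X\le\binom{U-1}{\tau}$, $\phi\le U-1$, and $W-\binom{U-2}{\tau}>0$ hold so that the floors in \eqref{eq:NA4} are meaningful. In the general-group-size setting one also has to verify carefully that $m$, as defined in \eqref{eq:tot-subfile-LTP1}, equals the exact drop in $\sum_k n_k$ caused by the small D2D groups. By contrast, once the incidence identity $\sum_k n_k=N$ is in place, the minimum follows immediately from Jensen's inequality.
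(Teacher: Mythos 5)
Your proposal follows essentially the same route as the paper's Appendix A: count $2^{n_k}-1$ MAC conditions per user, use the incidence identity $\sum_k n_k=(\tau+1)(M_{\text{T}}-i)-m$, and argue that the balanced profile minimizes while the concentrated (nested) profile built from $U$, $X$, and $\phi$ maximizes. The only difference is that you justify the extremal step cleanly via Jensen's inequality, whereas the paper argues it informally through a two-group case analysis and a monotonicity-of-slope remark; the feasibility and collateral-bookkeeping issues you flag for the maximum are indeed left implicit in the paper, which accordingly states that bound only as an upper bound.
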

\begin{proof}
Refer to Appendix~\ref{sec:app}.
\end{proof}

The number of MAC conditions varies between $\underline{M}(.)$ and $\overline{M}(.)$ based on which particular subsets have been selected for the D2D phase. For better intuition consider Fig.~\ref{fig:NA1}, which shows the normalized maximum and minimum number of MAC conditions ($K=10,\ L=9,\ \tau=1$) for different {number of D2D groups} $i$. As shown, the number of MAC conditions decreases drastically by using just a few D2D transmissions, which in turn dramatically reduces the complexity of the DL beamformer design. For example, for the case depicted in Fig.~\ref{fig:NA1}, by choosing only five different subsets of users among 45 available subsets, the number of MAC conditions can be reduced to half. Therefore, D2D significantly improves the beamformer design complexity.

\begin{figure}[t!]
\begin{minipage}[c]{0.49\textwidth}
\centering
\setlength\abovecaptionskip{-0.25\baselineskip}
\includegraphics[width=\linewidth]{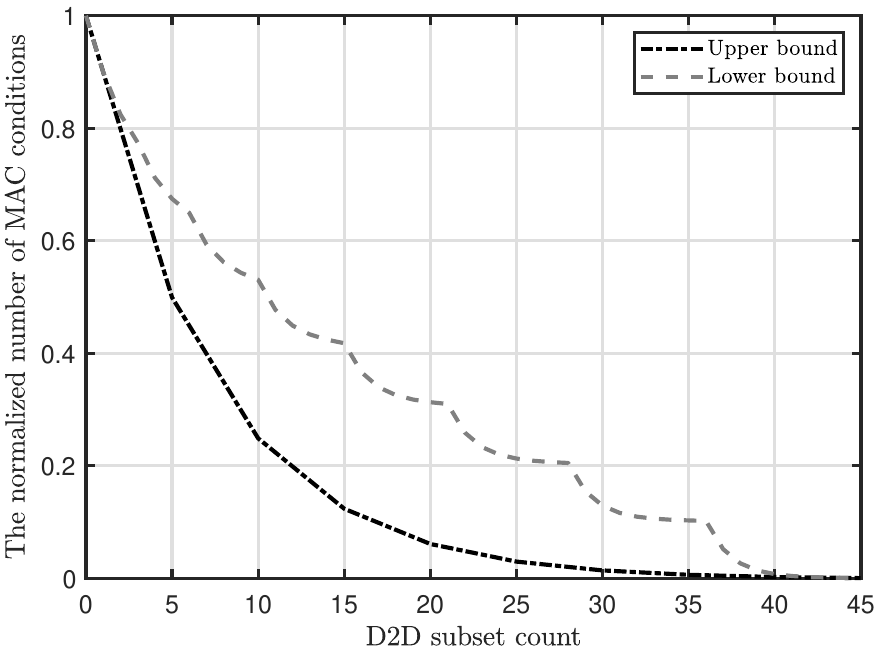}
\caption{The normalized number of MAC conditions vs the number of subsets assigned for D2D transmissions with $K=10, \ L=9$ and $\tau=1$.}
\label{fig:NA1}
\end{minipage}
\hspace{1mm}
\begin{minipage}[c]{0.49\textwidth}
\centering
\setlength\abovecaptionskip{-0.25\baselineskip}
\includegraphics[width=\linewidth]{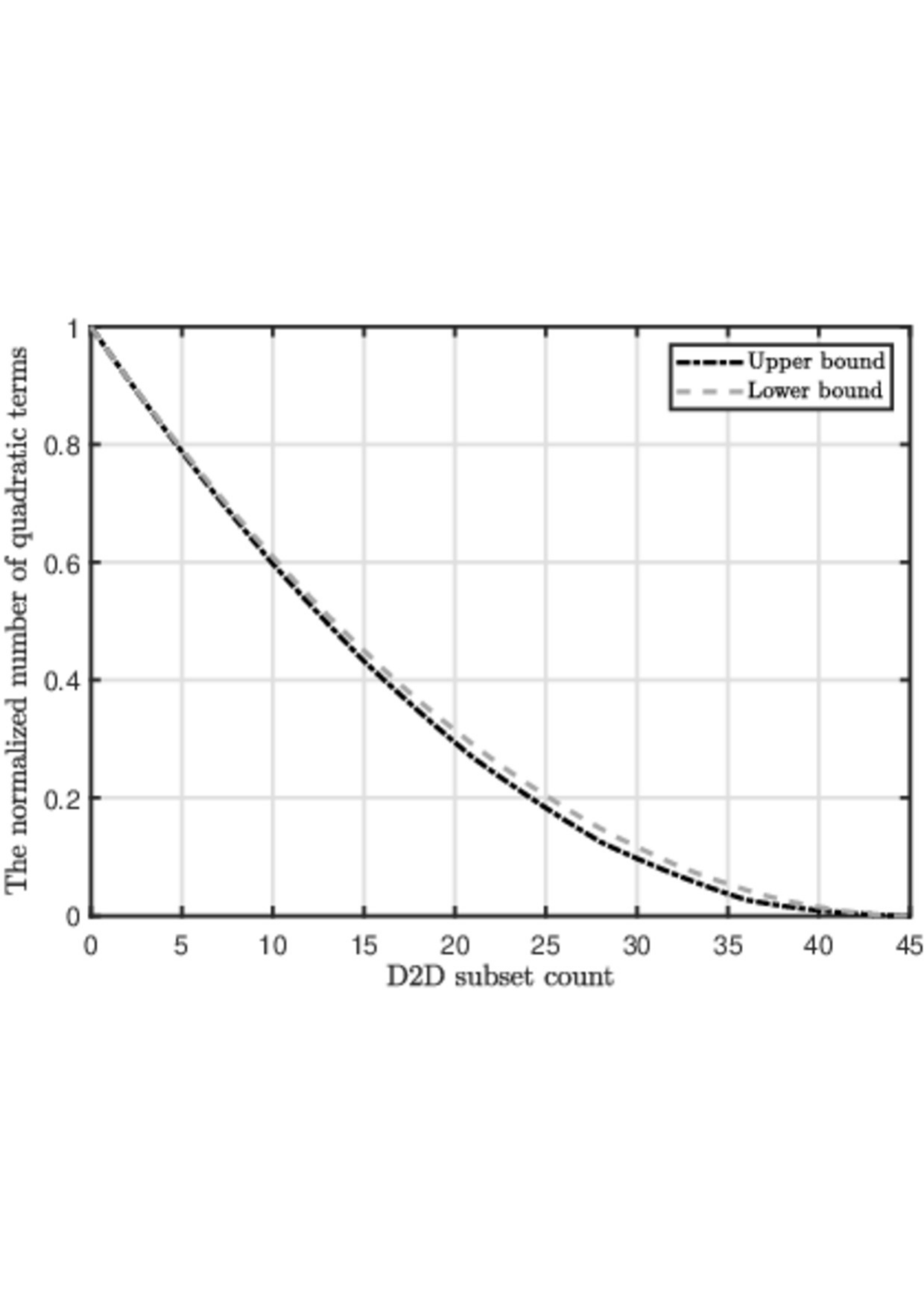}
\caption{The normalized number of quadratic variables vs the number of subsets assigned for D2D transmissions with $K=10, \ L=9$ and $\tau=1$.}
\label{fig:NA2}
\end{minipage} 
\end{figure}
 In the following, we provide the boundaries for the number of quadratic terms in the {SINR} constraints (the second important factor in the DL beamforming complexity).
\begin{theorem}
Considering $i(\tau+1)+m$ subfiles are delivered {via D2D transmissions}, among which $i(\tau+1)$ subfiles are delivered through D2D groups with size $(\tau+1)$, the number of quadratic terms is upper bounded by:
\begin{align}
    &\overline{Q}(\tau,i,m,L)= bA_2B_2+(\tau+L-b)A_1B_1,\label{eq:NA9}
\end{align}
where
    $A_1 \triangleq a,
    \ A_2 \triangleq a+1, \ B_1 \triangleq M_\text{T}-i-A_1+1,
    \ B_2 \triangleq M_\text{T}-i-A_2+1$. 
Moreover, the number of quadratic terms is lower bounded by: 
\begin{align}
    \underline{Q}(\tau,i,m,L)= & (\tau+L-U)A_{1}^{'}B_{1}^{'} \nonumber\\ & \ \ + (U-(\phi+1))A_{2}^{'}B_{2}^{'}\
    + A_{3}^{'}B_{3}^{'}, \label{eq:NA11}
\end{align}
where
    $A_{1}^{'} \triangleq W,\ A_{2}^{'} \triangleq A_1^{'}-\binom{U-2}{\tau}, 
    \ A_{3}^{'} \triangleq A_1^{'}-X, 
    B_{1}^{'} \triangleq \underline{M_\text{T}}-A_{1}^{'}+1,\ B_{2}^{'} \triangleq \underline{M_\text{T}}-A_{2}^{'}+1, \ 
    B_{3}^{'} \triangleq \underline{M_\text{T}}-A_{3}^{'}+1$. 
We denote $\underline{M_\text{T}} \triangleq  \left\lceil \frac{\big(\tau+1\big)\left(M_\text{T}-i\right)-m}{\tau+1}\right\rceil$ as the lower approximation of the total number of messages sent by the BS. 
Moreover, $a$, $b$, $X$, $U$, $m$, $M_\text{T}$, $W$ and $\phi$ are defined trough~\eqref{eq:NA1} to ~\eqref{eq:NA4}. 
\end{theorem}
\begin{proof}
Refer to Appendix~\ref{sec:appB}.
\end{proof}

Fig.~\ref{fig:NA2} depicts the upper and lower boundaries for the same scenario as in Fig.~\ref{fig:NA1}. The gap between these bounds is not as {considerable} as MAC conditions. Thus, compared to the quadratic terms, the number of MAC conditions is more affected by how different D2D subsets are selected for transmission. Nevertheless, the role of D2D transmissions in reducing the total number of quadratic terms is notable. For example, choosing five different D2D subsets for the considered case reduces the total number of quadratic terms by $20 \%$. 

\begin{remark}
For ease of exposition,  $\tau+L = K$ is assumed in the equations and algorithms throughout this paper. However, the proposed methods can be easily generalized to other regimes. For example, in case $\tau+L <K$, there are $\binom{K}{\tau+L}$ orthogonal transmission phases. All the equations in this paper are valid separately for each phase. Similarly, for the case $\tau+L > K$, $\tau + L$ should be replaced with $K$ in all the equations. Moreover, for the restricted spatial DoF scenario~\cite{Tolli-Shariatpanahi-Kaleva-Khalaj-Arxiv18}, discussed also in Section~\ref{sec:restrictedDoF}, $\tau+L$ should be changed to $\tau+\alpha$. Finally, the $\beta$ parameter introduced in~\cite{Tolli-Shariatpanahi-Kaleva-Khalaj-Arxiv18} can be easily applied to the proposed equations by changing $M_\text{T}$ to $P\binom{\tau+\beta}{\tau+1}$, and $W$ to $\binom{\tau+\beta-1}{\tau}$, where $\alpha, \ \beta$ and $P$ are defined in Section~\ref{sec:restrictedDoF}.
\end{remark}
\section{Numerical results}
\label{sec:simres}
In this section, we provide numerical examples for two scenarios with $K=3$ and $K=4$ users (see Fig.~\ref{fig:sysmodel} and Fig.~\ref{fig:sysmodel_complex}). Due to the complex beamforming procedure for the multiserver-based schemes (such as the one proposed in this paper), we have considered a limited number of users in the network. We consider a circular cell with a radius of  $R=100$ meters, where the BS is located in the cell center. To investigate the effect of D2D transmission in different situations, we introduce a smaller circle with radius $r$ within the cell area, wherein the users are randomly scattered. Thus, the maximum distance between any two users is $2r$. In contrast, the  users' distance to BS varies between 0 and $R$. {Hence, by} changing $r$, the maximum users' separation in D2D mode is controlled, which helps us determine the beneficial users' distance in the D2D phase. 

For {D2D} transmissions, the channel gains are generated as ${h}_{ik} = d_{ik}^{-\frac{n_{\text{D2D}}}{2}}g_{ik}$, where $g_{ik} \sim \mathbb{CN}(0,1)$, $n_{\text{D2D}} = 2$ is the path-loss exponent, and $d_{ik}$ is the inter-user distance. The channel vectors for {DL} transmission are generated from i.i.d statistics with $\mathbf{h}_k = d_{k}^{-\frac{n_{\text{DL}}}{2}} {\mathbf{g}}_k$, where ${\mathbf{g}}_k \sim \mathbb{CN}(0,\text{$\mathbf{I}$})$, $n_{\text{DL}} = 3$ is the path-loss exponent, and $d_k$ is the BS-user distance. Transmit powers for D2D transmissions at the user side are adjusted so that the average received SNR at the receiver is $0$ dB at a $10$-meter distance. The BS transmit power is adjusted such that the average received SNR is $0$ dB at a $100$-meter distance. For comparison, we have also consider two benchmark schemes~\cite{Tolli-Shariatpanahi-Kaleva-Khalaj-Arxiv18} and~\cite{Ji2016} denoted as \textit{Multicasting only} and \textit{D2D~only}, respectively. {Note that the optimality of the scheme in~\cite{Ji2016} is shown in~\cite{D2D-CC-Optload-memtradeof-caire-2019}, and the superiority of the proposed method in~~\cite{Tolli-Shariatpanahi-Kaleva-Khalaj-Arxiv18} over the traditional unicasting approaches has been shown therein. Hence, we compare the proposed D2D/DL delivery scheme to these two schemes in this work.}

Fig.~\ref{fig:NA3} shows average delivery rate for $K=3$, $L=2$ and $\tau=1$ (section~\ref{sec:Example1}) as a function of inner circle radius $r$.
The figure demonstrates that when users are close to each other, there is a significant gain from using a combination of multicasting and D2D transmissions. Compared to the \textit{D2D only} rate that decreases drastically as the inter-user distance is increased, the proposed approach shows steadier behavior.
The beneficial range for D2D transmission in this particular scenario appears to be between $r= 0 \ \text{and} \ 5 m$ ($10 m$ maximum distance). The range can change if the path-loss exponent, the available power for both D2D and DL transmission, $\tau$, etc., are varied. The simulation results demonstrate that sending all the data only {via} D2D transmissions or only through multicasting results in a lower rate than the proposed approach with the optimized mode selection.

\begin{figure}[t!]
\begin{minipage}[c]{0.49\textwidth}
\centering
\setlength\abovecaptionskip{-0.25\baselineskip}
\includegraphics[width=\linewidth]{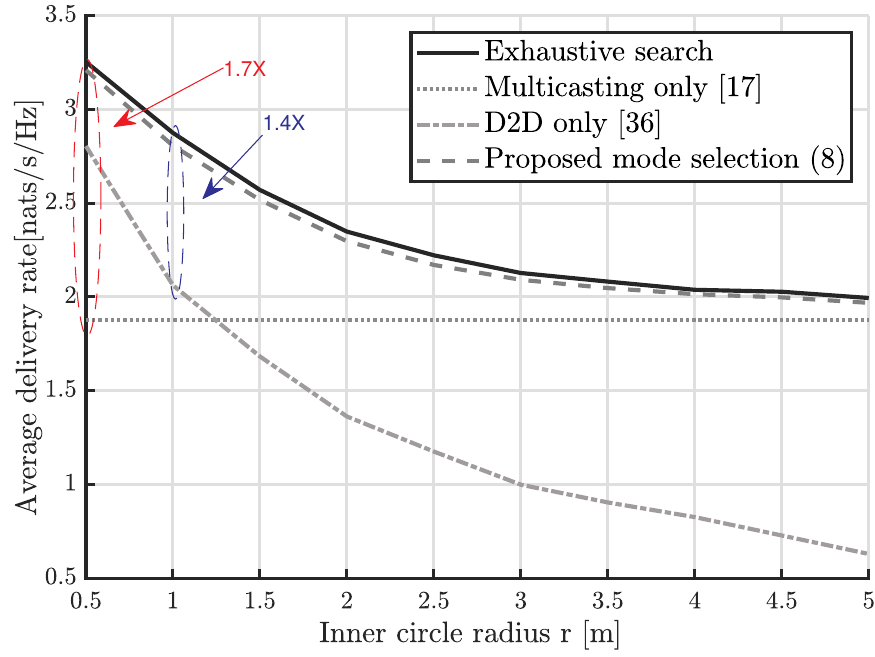}
\caption{Average delivery rate vs. inner circle radius $r$ for $K=3$, $L=2$ and $\tau=1$.}
\label{fig:NA3}
\end{minipage}
\hspace{1mm}
\begin{minipage}[c]{0.49\textwidth}
\centering
\setlength\abovecaptionskip{-0.25\baselineskip}
\includegraphics[width=\linewidth]{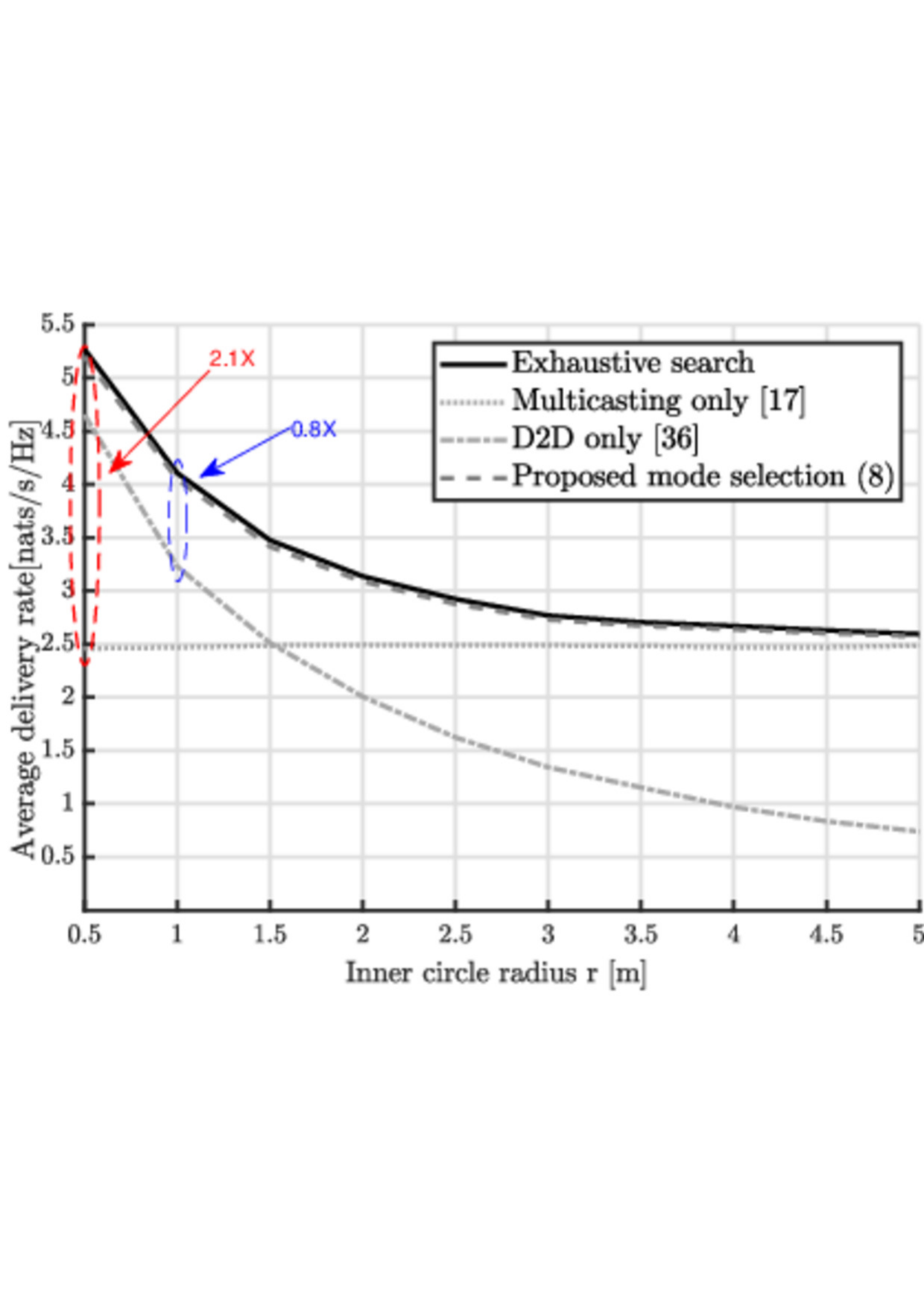}
\caption{Average delivery rate vs. inner circle radius $r$ for $K=4$, $L=2$ and $\tau=2$.}
\label{fig:NA4}
\end{minipage} 
\end{figure}

Fig.~\ref{fig:NA4} shows the average delivery rate versus the inner circle radius for $K=4,\ \tau=2,\  L=2$ (section~\ref{sec:Example2}). In this case, the gain from D2D transmission among nearby users is larger than the case $K=3$ due to more D2D transmission opportunities. However, the gain of D2D transmission decreases more rapidly compared to the case $K=3$. In this case, $\tau$ equals $2$, {so} more users {must} be closer to each other to perform efficient D2D transmission. Therefore, it can be concluded that for a fixed number of users $K$, increasing $\tau$ results in fewer D2D opportunities and fewer D2D subset selection variants. 
However, with a fixed $\tau$, increasing the number of users $K$ will result in a more diverse D2D combination and higher gain over the \textit{D2D only} case.

It is worth mentioning that using the heuristic D2D mode selection criteria (defined in Section \ref{sec:general}) results in a minimal average delivery rate loss, with a significantly reduced complexity compared to the exhaustive search. Simulation results show that the approximated rate in \eqref{eq:aproximatedl} is very close to the actual rate \eqref{Eq:MAC_general} for different $\Omega^\mathcal{S}$ and different network parameters (i.e., $\tau,\ L,\ K,\ \text{etc}$).
\begin{figure}[t!]
    \centering 
    \setlength\abovecaptionskip{-0.25\baselineskip}
    \includegraphics[width=1\columnwidth,keepaspectratio]{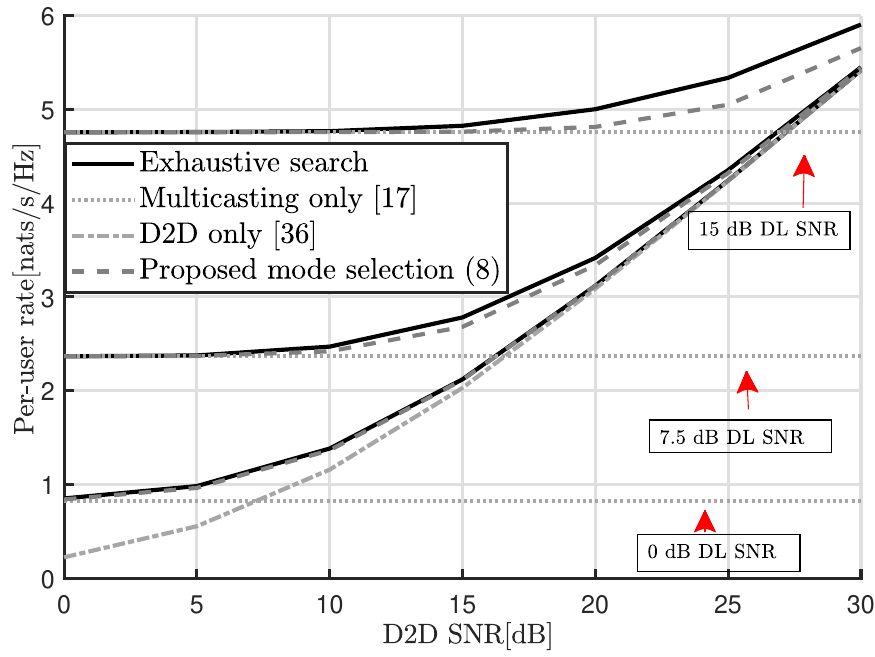}
    \caption{Average delivery rate Vs. D2D and DL SNR for $K=4$, $L=2$  $\tau=2$, and $m=0$.}
\label{fig:DIFD2DSINR}
\end{figure}

Fig.~\ref{fig:DIFD2DSINR} depicts the average delivery rate for different D2D and DL SNRs, where all the users experience similar link conditions. As illustrated, the gap between the exhaustive search and the proposed scheme remains negligible in different SNR regions. When the received SNR of DL and D2D links are within the same range, i.e., the rate difference is not significant, choosing the right beneficial D2D groups is challenging. However, the proposed scheme still follows the exhaustive search in such scenarios. Since the curves have similar behavior  for the two cases, we only represent the $K=4$ case.

Fig.~\ref{fig:DIFD2DSINR-GGS-10db} compares the proposed scheme for D2D group size $|\mathcal{D}|=\tau+1 = 3$ (proposed mode selection~\eqref{eq:therishold}) versus general D2D group size (proposed mode selection~\eqref{eq:general-c-theri}) for different D2D received SNRs.\footnote{Due to an excessively large number of different subset combinations, performing the exhaustive search is computationally challenging ($\sim2^{10}$ different cases must be evaluated); thus, only the two thresholds are being compared.} When users experience similar channel conditions, the authors in~\cite{Ji2016} show that the achievable rate of their proposed D2D scheme, which is also performed in the D2D mode in this paper, is within a constant factor from the optimal value. The numerical example shown in Fig.~\ref{fig:DIFD2DSINR-GGS-10db} further demonstrates that when users experience similar channel conditions, considering $|\mathcal{D}|<\tau+1$ does not significantly improve the total delivery time (which corresponds to the results in~\cite{Ji2016}). 

\begin{figure}[t!]
\begin{minipage}[c]{0.49\textwidth}
\centering
\setlength\abovecaptionskip{-0.25\baselineskip}
\includegraphics[width=\linewidth]{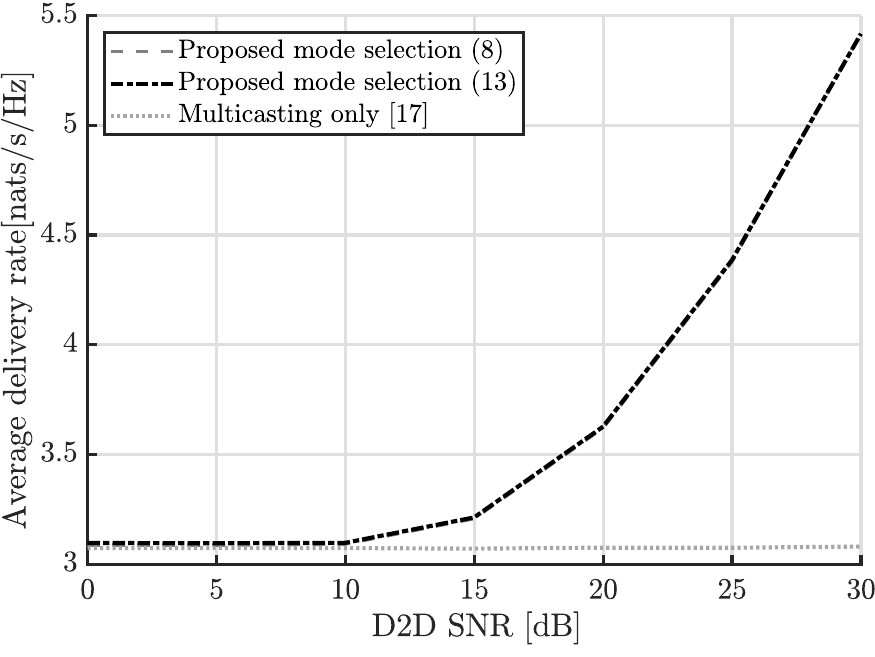}
\caption{Average delivery rate Vs. D2D SNR for $K=4$, $\tau=2$, $L=2$, and $10$ dB DL SNR.}
\label{fig:DIFD2DSINR-GGS-10db}
\end{minipage}
\hspace{1mm}
\begin{minipage}[c]{0.49\textwidth}
\centering
\setlength\abovecaptionskip{-0.25\baselineskip}
\includegraphics[width=\linewidth]{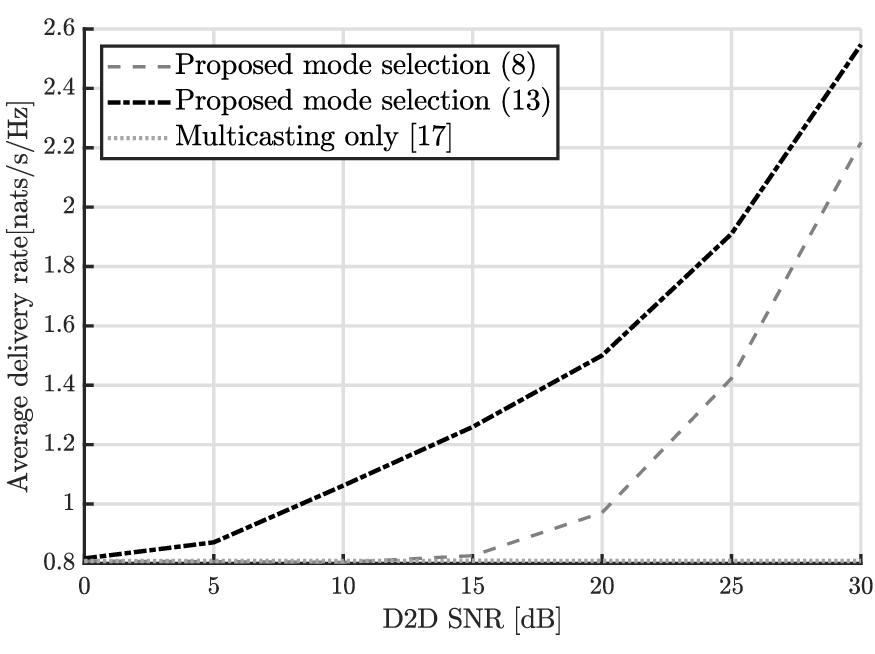}
\caption{Average delivery rate Vs. D2D SNR for $K=4$, $\tau=2$, $L=2$, and $0$ dB DL SNR. User pairs $(1,3)$ and $(2,4)$ are attenuated by $10$ dB.}
\label{fig:DIFD2DSINR-GGS-0db}
\end{minipage} 
\end{figure}

Note that the results in~\cite{Ji2016} are valid for the error-free D2D links with constant link capacity; however, when users experience different D2D link capacity, the results in~\cite{Ji2016}
do not hold anymore. For instance, the message $\Tilde{X}_{1,2,3}$ in section~\ref{sec:Example2} can still be transmitted through two D2D groups, e.g., $(1,2)$ and $(1,3)$, even if user pair $(3,2)$ is not in a favorable D2D condition (i.e., D2D group \{1,2,3\} does not satisfy Eq.~\eqref{eq:therishold}). Fig.~\ref{fig:DIFD2DSINR-GGS-0db} illustrates the average delivery rate for the same scenario as in Fig.~\ref{fig:DIFD2DSINR-GGS-10db}, but when D2D user pairs $(1,3)$ and $(2,4)$ are attenuated by 10 dB, i.e., 
user pairs $(1,2), \ (1,4), \ (2,3)$, and $ (3,4)$ experience similar D2D SNRs while 
D2D SNR for user pairs $(1,3)$ and $(2,4)$ is $10$ dB less than the rest of the D2D pairs. The results show that {considering D2D groups with various sizes is crucial when} users experience uneven D2D link conditions.

\section{Conclusion}
\label{sec:conclusions}
A novel cache-aided delivery method, {comprising orthogonal D2D and DL transmission phases}, was proposed to improve the {multiantenna CC- based content delivery}. In the proposed method, the DL multicasting of file fragments is complemented by allowing direct {D2D} exchange of local cache content. The benefits of partial D2D offloading the content were investigated. We showed that the benefits of using D2D transmission are two-fold. First, nearby users may greatly benefit from direct content exchange providing a notably increased overall delivery rate. Second, the partial offloading of the contents in the D2D phase decreases the DL beamforming complexity. This is due to the reduced number of variables and conditions in the beamformer optimization problem.

We showed that the D2D optimal subset selection imposes high computational complexity due to the DL multicast design. On the other hand, it is a combinatorial problem and hence NP-hard. Therefore, we proposed an approximation for the DL achievable rate without computing beamformers to overcome these practical limitations. Next, based on the approximated DL rate, we provided a low complexity mode selection algorithm, which allows efficient determination of D2D opportunities even for many users. The simulation results demonstrated that the proposed heuristic mode selection performs comparably to the exhaustive search with significantly reduced complexity. Furthermore, further extension is possible by considering the users' energy expenditure in the D2D mode and the energy efficiency of the BS in the DL mode. {Another direction includes applying ML-based tools for cache placement and D2D/DL mode selection. Particularly, users' movement patterns can be predicted using past data history, which can then be used to decide on partial file delivery in D2D mode.}
{Finally}, further overall improvement can be achieved by allowing parallel mutually interfering transmissions within multiple D2D groups.

\appendices \section{Proof of Theorem 1}
\label{sec:app}

First, consider the case where no D2D transmission is available. In this case, the total number of messages transmitted to all the users is ${M}_{\text{T}}=\binom{\tau+L}{\tau+1}$, and each user needs $W=\binom{\tau+L-1}{\tau}$ of these messages to decode its intended file. Then, the per-user number of MAC conditions is $J(\tau,i=0,m=0,L) = 2^{W}-1$. Accordingly, considering the $\tau+L$ number of users served in each transmission, the total number of MAC conditions in the beamformer optimization problem is $M(\tau,i=0,m=0,L) = (\tau+L)J_0 $.

Now, when $i$ user groups of size $\tau+1$ are selected for the D2D phase, the total number of MAC conditions varies based on which user groups are selected. To illustrate, consider the simple scenario where $i=2$ and $\{\mathcal{S}_1 \subset [K] : \ |\mathcal{S}_1|=\tau+1\}$ and $\{\mathcal{S}_2 \subset [K] : \ |\mathcal{S}_2|=\tau+1\}$ denote the first and the second D2D groups, respectively. Then, the total number of MAC conditions varies based on the number of users in common between the two groups. In this regard, denoting $c_{u}$ as the number of users in common, i.e., $c_{u} = |\mathcal{S}_1 \cap \mathcal{S}_2|$, the total number of MAC conditions varies as follows.

\subsubsection*{Case-1 ($c_{u} = 0$)}
In this case, all the users in the two D2D groups receive one subfile. Thus, they need $W-1$ messages in the DL signal to decode their intended files. Accordingly, the number of MAC conditions for these users is $J(\tau,i=1,m=0,L) = 2^{W-1}-1$, and the total number of MAC conditions (in this case) is 
    $M(\tau,i=1,m=0,L)  = (L-\tau-2)J(\tau,i=0,m=0,L) +2(\tau+1)J(\tau,i=1,m=0,L) \approx M(\tau,i=0,m=0,L)-2(\tau+1)J(\tau,i=1,m=0,L)$.
\subsubsection*{Case-2 ($c_{u} \neq 0$)}
Denote $\mathcal{S}_c = \mathcal{S}_1 \cap \mathcal{S}_2$ and $\mathcal{S}_r = \mathcal{S}_1 \cup \mathcal{S}_2 \setminus \mathcal{S}_1 \cap \mathcal{S}_2$ as the set of common/joint and uncommon/disjoint users, respectively. Then, in this case, the common users receive two subfiles, and the rest of the users in set $\mathcal{S}_r$ receive one subfile through the D2D phase. Thus, the common users need $W-2$ number of messages in the DL phase, and the number of MAC conditions for these users is $J(\tau,i=2,m=0, L) = 2^{W-2}-1$, while the rest require $W-1$ number of subfiles in the DL phase. Accordingly, the total number of MAC conditions (in this case) is
    $M(\tau,i=2,m=0, L) = (L-\tau-2+ c_{u}))J(\tau,i=0,m=0, L) +(2\tau+2 - c_{u})J(\tau,i=1,m=0, L) + c_{u}J(\tau,i=2,m=0, L) \approx M(\tau,i=1,m=0, L)+c_{u}J(\tau,i=2,m=0, L)$.

Compared to the first case, the number of MAC conditions is more due to having common users. Therefore, when D2D transmissions occur uniformly among all users, the total number of MAC conditions is minimized. In other words, when all the users need almost the same number of subfiles in the DL  phase, the number of MAC conditions is minimized. On the other hand, when D2D transmissions occur to a limited number of users, the number of MAC conditions is maximized. 

\subsection{Minimum number of MAC conditions}
Let us define a super set $\Bar{\mathcal{V}}$ which includes all the D2D sets with size less than $\tau+1$, i.e., each element in $\Bar{\mathcal{V}}$ is a set $\mathcal{V} \subset [K]$ such that $2 \leq |\mathcal{V}| < \tau+1$. We also define another super set $\Bar{\mathcal{D}}$, where each element in $\Bar{\mathcal{D}}$ is a set $\mathcal{D} \subseteq [K]$ such that $|\mathcal{D}| = \tau+1$, i.e.,
\begin{equation} \label{eq:full-size-d2d-group}
    \Bar{\mathcal{D}} = \{\mathcal{D} \,|\,\mathcal{D} \subseteq [K], \,|\mathcal{D}| = \tau+1\}.
\end{equation}
Now, assume $m$ subfiles are delivered through D2D groups in $\Bar{\mathcal{V}}$, and $(\tau+1)i$ subfiles are delivered through D2D groups in $\Bar{\mathcal{D}}$, where $|\Bar{\mathcal{D}}| = i$. In this case, the total number of remaining subfiles in the DL signal is $(\tau+1) \Big(\binom{\tau+L}{\tau+1}-i\Big)-m$.
Based on the previous paragraph, when these subfiles are uniformly allocated to all users, the total number of MAC conditions in the beamformer design is minimized. Therefore, in such cases, 
$\tau+L-b$ users receive $a=\left\lfloor{\frac{(\tau+1)(\binom{\tau+L}{\tau+1}-i)-m}{\tau+L}}\right\rfloor$ subfiles and $b\triangleq\left(\tau+1\right)\left({M}_{\text{T}}-i\right)-m-a\left(\tau+L\right)$ users receive $a+1$ subfiles in the DL phase. Consequently, the minimum number of MAC conditions in the DL phase is
    $\underline{M}(\tau,i,m, L)=(\tau+L-b)(2^a-1)+b(2^{a+1}-1)$.

\subsection{Maximum number of MAC conditions}

Let us first consider the $m=0$ case; as discussed earlier, the number of MAC conditions is maximized when D2D groups are selected from a limited number of users. Thus, to maximize the number of MAC conditions, we first denote $\mathcal{U} \subseteq [K]$ as the smallest set of users to form $i$ different D2D groups with size $\tau+1$, i.e., $U = \arg\min \binom{\hat{U}}{\tau +1}$ such that $ \binom{\hat{U}}{\tau +1} \geq i$, where $U = |\mathcal{U}|$. Clearly, $U$ is not smaller than $\tau+1$ based on the definition. Then, we define $\Bar{\mathcal{C}}_n(\mathcal{A}) = \{\mathcal{B} \ | \ \mathcal{B} \subseteq \mathcal{A}, \ |\mathcal{B}|=n\}$ as the collection of all the subsets of size $n$ from the set $\mathcal{A}$, where $|\Bar{\mathcal{C}}_n(\mathcal{A})| = \binom{|\mathcal{A}|}{n}$. We also denote all the non-empty subsets of the set $\mathcal{U}$ as $\mathcal{S}$ i.e., $\mathcal{S} \subseteq \mathcal{U}, |\mathcal{S}| \geq 1$. Moreover, we call such a set $\mathcal{S}$ to be a \textit{utilized-D2D-set} if $\Bar{\mathcal{C}}_{\tau+1}(\mathcal{S}) \subset \Bar{\mathcal{D}}$, where $\Bar{\mathcal{D}}$ is defined in~\eqref{eq:full-size-d2d-group}. We represent the set of all the \textit{utilized-D2D-set}s as $\Bar{\mathcal{S}}(\mathcal{U})$, i.e.,
$\Bar{\mathcal{S}}(\mathcal{U})=\{\mathcal{S} \ | \ \mathcal{S} \subseteq \mathcal{U}, \ \Bar{\mathcal{C}}_{\tau+1}(\mathcal{S}) \subset \Bar{\mathcal{D}}\}$. We are interested in the \textit{utilized-D2D-set} with the largest number of users. We call such a set as the \textit{fully-utilized-D2D-set} and denote it with $\hat{S}$. In other words, $\hat{S} \in \Bar{\mathcal{S}}(\mathcal{U})$ and for every $\mathcal{S}_0 \in \Bar{\mathcal{S}}(\mathcal{U})$ we have $|\hat{S}| \geq |\mathcal{S}_0|$. Therefore, by the definition, $|\hat{\mathcal{S}}| = U-1$, and the users belonging to the set $\hat{\mathcal{S}}$ form $\binom{U-1}{\tau +1}$ D2D groups. The remaining $X=i  -\binom{U-1}{\tau+1}$ D2D groups include $\tau$ number of users from the set $\hat{\mathcal{S}}$ and one remaining user denoted by $k_r$, where $k_r = \mathcal{U} \setminus \hat{\mathcal{S}}$.

Therefore, in this case, there exist $3$ different types of users (see Fig.~\ref{fig: mac-upper-bound})
\begin{itemize}
\item \textbf{First type}, Those who have not received any subfile in D2D transmissions. So they need $W = \binom{\tau+L-1}{\tau}$ number of subfiles in DL phase, and their total number is $K_0 = \tau+L-U$.
\item \textbf{Second type}, Those who have received all the available subfiles in the set $\hat{\mathcal{S}}$. 
Thus, they need $W_{\hat{\mathcal{S}}} =  \binom{\tau+L-1}{\tau}-\binom{(U-1)-1}{\tau}$ number of subfiles in the DL phase, and their total number is $K_1 = U-1$. For convenience, we ignore the received subfiles in the $X$ remaining D2D groups for $\forall k \in \hat{\mathcal{S}}$.
\item \textbf{Third type}, The user $k_r$ who has received $X$ number of subfiles in the groups consisting of itself and $\tau$ members of the set $\hat{\mathcal{S}}$, so it needs $W_{k_r} = \binom{\tau+L-1}{\tau}-X$ number of subfiles in the DL phase.
\end{itemize}
Therfore, the maximum number of MAC conditions for $|\Bar{\mathcal{D}}|=i$ and $m=0$ is upper bounded by
        $\overline{M}(\tau,i,m, L)= K_{0}(2^{W}-1) + K_{1}(2^{W_{\hat{\mathcal{S}}}}-1)+ (2^{W_{k_r}}-1)\text{.}$ 

For the case $m \neq 0$, suppose each D2D subset $\mathcal{V}$ delivers a single subfile to each of the users $k \in \mathcal{V}$; accordingly, each user $k \in \mathcal{V}$ requires one less subfile in the DL phase. Based on the previous discussion, removing $v$ subfiles from one user reduces the total number of MAC conditions less than removing one subfile for $v$ users, i.e.,
\begin{equation*}
    M(\tau,i,0,L)-v2^{x-1} < M(\tau,i,0,L)-2^{x-v}.
\end{equation*} 
Therefore, we assume {that} each D2D subset of size $|\mathcal{V}| = v$ delivers all the subfiles to a single user. Also, to consider as limited D2D user sets as possible, we assume these $m$ subfiles are also shared among the second type of users. In this regard, considering total number of $m$ subfiles delivered through D2D groups in $\Bar{\mathcal{V}}$, the minimum number of served users is $\phi = \left\lfloor \frac{m}{W_{\hat{\mathcal{S}}}} \right\rfloor$. Therefore, the maximum number of MAC conditions for $|\Bar{\mathcal{D}}|=i$ and $m \neq 0$ is upper bounded by 
        $\overline{M}(\tau,i,m, L)= \ K_{0}(2^{W}-1) + (K_{1}-\phi)(2^{W_{\hat{\mathcal{S}}}}-1)+ (2^{W_{k_r}}-1)\text{.}$
\begin{figure}
    \centering 
    \setlength\abovecaptionskip{-0.25\baselineskip}
    \includegraphics[width=1\columnwidth,keepaspectratio]{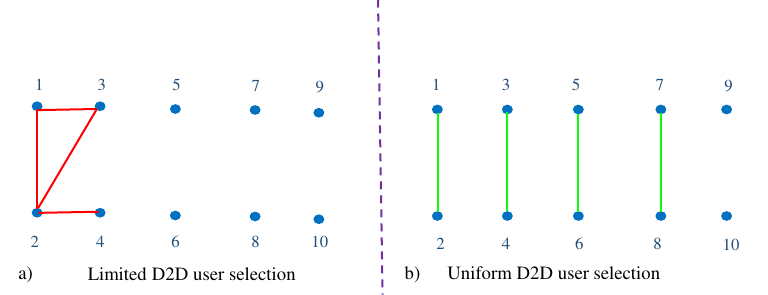}
    \caption{a) The limited D2D user selection case, where $K=10$, $ \tau=1$, $i=4$, $\Bar{\mathcal{D}} = \{\{1,2\}, \{1,3\}, \{2,3\}, \{2,4\}\},$ $ \mathcal{U}=\{1,2,3,4\},$ $\hat{\mathcal{S}}=\{1, 2, 3\}$, and $k_r = 4$. b) Uniform D2D subset selection case.}
\label{fig: mac-upper-bound}
\end{figure}
\section{Proof of Theorem 2}
\label{sec:appB}
When D2D transmission is not feasible, the BS {transmits} ${M}_{\text{T}}$ messages in the DL phase, where each user needs $W$ number of these messages and considers the rest (i.e., ${M}_{\text{T}}-W$ terms) as interference. In this case, for each intended message $\mathcal{D}$  of user $k$ ($\mathcal{D}$ is the message index), one quadratic term for the intended message ($|{\mathbf{h}}_{k}^{H}{\mathbf{ w}}_{\mathcal{D}}|^2, \ \mathcal{D} \in \Omega_{k}^{\mathcal{S}}$) plus ${M}_{\text{T}}-W$ quadratic terms for the interfering messages ($|{\mathbf{h}}_{k}^{H}{\mathbf{w}}_{\mathcal{D'}}|^2$, \ $\mathcal{D'} \in \mathcal{I}_k$) {is} considered in~\eqref{Eq:MAC_general}. Thus, in total, the BS {considers} $Q_{k}=W(M_{\text{T}}-W+1)$ quadratic terms in its' optimization problem for each user. Therefore,
the total number of quadratic variables $Q_{k}$ is a quadratic function in terms of $W$ (see Fig. \ref{fig:quadbehav} (a)), which is maximized when $\Bar{W} = \frac{M_{\text{T}}+1}{2}$. Moreover, when D2D transmission is not feasible, $W$ is equal to $\binom{\tau+L-1}{\tau}=\frac{\tau+1}{\tau+L}M_{\text{T}}$, where $W$ is greater than $\Bar{W}$ for $1 \leq L < \tau+2$, and $W$ is less than $\Bar{W}$ for $\tau+2 < L$ (see Fig.~\ref{fig:quadbehav} (a)).

For the case $|\Bar{\mathcal{D}}| = i$ (assume $m=0$), let us define $W^{i}_{k}$ as the number of DL messages needed by user $k$ after $i$ D2D time slots. We also define  $M^{i}_{\text{T}}$ as the total number of transmitted messages in the DL phase after $i$ D2D time slots. In this case, $W^{i}_{k}$ is independent of $M^{i}_{\text{T}}$, i.e., $0 \leq \frac{W^{i}_{k}}{M^{i}_{\text{T}}} \leq 1$, and $Q^{i}_{k} = W^{i}_{k}(M^{i}_{T}-W^{i}_{k}+1)$ is a quadratic function in terms of $W^{i}_{k}$.

We consider two extreme scenarios in this case (similar to Appendix~\ref{sec:app}): 
\begin{enumerate}
\item When the D2D subsets are selected uniformly among all the users.

\item When the D2D subsets are selected among a limited number of users.
\end{enumerate}
In the first scenario, since all the users have received almost the same number of subfiles in the D2D phase, they all need almost the same number of messages in the DL phase (see Fig.~\ref{fig:quadbehav} (b)). However, in the second scenario, since some users have received most of their intended files in the D2D phase, they need a few numbers of messages in the DL phase. In contrast, users who did not receive any subfile in the D2D phase need most of the subfiles in the DL phase. Thus, in this scenario, $W^{i}_{k}$s are either on the right-hand side of the maximum point or on the left-hand side of it (see Fig.~\ref{fig:quadbehav} (c)). 
\begin{figure*}[ht]
    \centering 
    \setlength\abovecaptionskip{-0.25\baselineskip}
    \includegraphics[width=2\columnwidth,keepaspectratio]{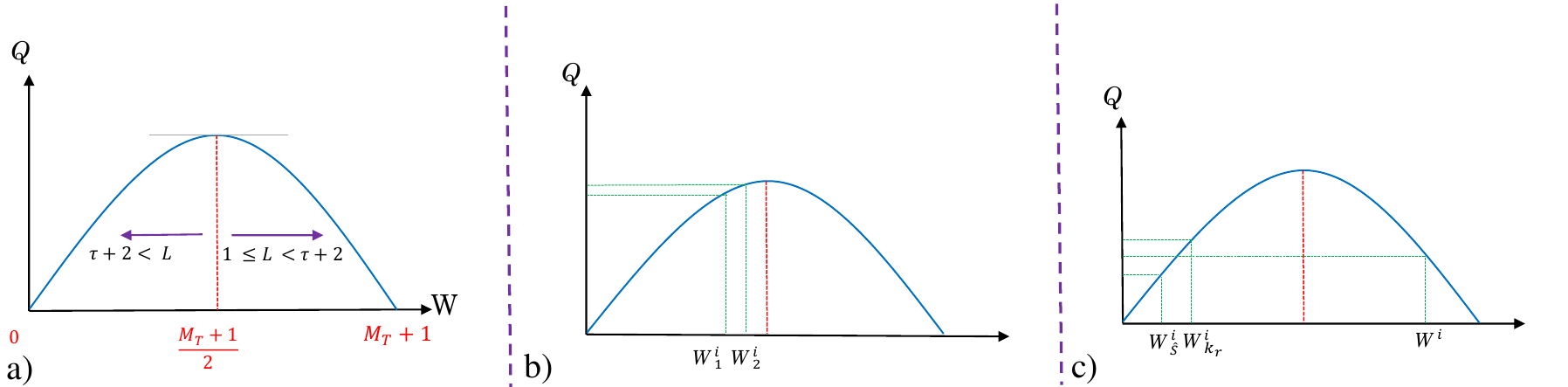}
    \caption{Total number of quadratic variables for each user in different scenarios: a) no D2D case b) uniform scenario c) scenario with {a} limited number of users.}
\label{fig:quadbehav}
\end{figure*}

Therefore, these two cases are two extreme cases for the total number of quadratic variables as well. In this regard, {the total number of quadratic variables is maximized when D2D subsets are uniformly selected among all the users.} On the other hand, when a limited number of users are selected for D2D transmissions, $Q$ is minimized. Finally, the total number of quadratic variables (after $i$ D2D time slots) is computed as 
$Q^{i}=\sum_{k \in [\tau+L]}Q^{i}_{k}$.
Substituting the total number of needed messages for each user, defined in appendix~\ref{sec:app}, equations \eqref{eq:NA9} and \eqref{eq:NA11} are~achieved.

For the case $m\neq 0$, $M^{i}_{T}$ can be lower-approximated by $\underline{M_\text{T}}= \left\lceil\frac{\big(\tau+1\big)\big(\binom{\tau+L}{\tau+1}-i\big)-m}{\tau+1}\right\rceil$, where we assume each $\tau+1$ number of transmitted subfiles through D2D groups in $\Bar{\mathcal{V}}$ removes one of the remaining messages in the DL phase. The $M^{i}_{T}$ is upper-approximated by $\overline{M_\text{T}} = \binom{\tau+L}{\tau+1}-i$, where we assume no messages are being eliminated after D2D transmissions in D2D groups with size less than $\tau+1$.
In this case, the total number of quadratic variables is computed as same as the $m=0$ case by substituting $M^{i}_{T}$ with $\overline{M_\text{T}}$ ($\underline{M_\text{T}}$) for upper bound (lower bound).

\bibliographystyle{IEEEtran}
\bibliography{IEEEabrv,conf_short,jour_short,references}

\begin{thebibliography}{00}


\bibitem{benis5gcache}
E.~{Bastug}, M.~{Bennis}, and M.~{Debbah}, ``Living on The Edge: The Role of Proactive Caching in {5G} Wireless Networks,'' \textit{{IEEE} Commun. Mag.}, vol.~52, no.~8, pp. 82--89, Aug 2014.

\bibitem{Paschos2016}
G.~Paschos, E.~Bastug, I.~Land, G.~Caire, and M.~Debbah, ``Wireless Caching:
  Technical Misconceptions and Business Barriers,'' \textit{{IEEE} Commun. Mag.},
  vol.~54, no.~8, pp. 16--22, August 2016.
  
\bibitem{Mahmoodi_immersive_isit2021}
  H.~B. {Mahmoodi}, M.~{Salehi}, and A.~{Tölli},  `` Non-Symmetric Coded Caching for Location-Dependent Content Delivery,'' in \textit{Proc. IEEE Int. Symp. Inf. Theory (ISIT)}, pp. 712--717, July 2021.

  \bibitem{Location-dependent-ICC2022}
  H. B.~{Mahmoodi}, M.J.~{Salehi}, and A.~{Tölli}, ``{Non-Symmetric Multi-Antenna Coded Caching for Location-Dependent Content Delivery},`` in \textit{Proc. IEEE Int. Conf. on  Commun.}, pp. 5165-5170, Jun 2022.

  \bibitem{mahmoodi2022asymmetric}
  {H. B.~{Mahmoodi}, M.J.~{Salehi}, and A.~{Tölli}, ``{Asymmetric Multi-Antenna Coded Caching for Location-Dependent Content Delivery},`` in \textit{Proc. IEEE Global Commun. Conf.}, pp. 1930-1935, Dec. 2022.}

\bibitem{MaddahAli-2014}
M.~A. Maddah-Ali and U.~Niesen, ``Fundamental Limits of Caching,'' \textit{{IEEE} Trans. Inf. Theory}, vol.~60, no.~5, pp. 2856--2867, May 2014.

\bibitem{mehdibennisD2D}
R.~{Amer}, M.~M. {Butt}, M.~{Bennis}, and N.~{Marchetti}, ``Inter-Cluster Cooperation for Wireless {D2D} Caching Networks,'' \textit{{IEEE} Trans. Wireless Commun.}, vol.~17, no.~9, pp. 6108--6121, Sep 2018.

\bibitem{Karamchandi2016}
N.~Karamchandani, U.~Niesen, M.~A. Maddah-Ali, and S.~N. Diggavi,
  ``Hierarchical Coded Caching,'' \textit{{IEEE} Trans. Inf. Theory}, vol.~62, no.~6, pp. 3212--3229, Jun 2016.

\bibitem{Pedarsani2016}
R.~Pedarsani, M.~A. Maddah-Ali, and U.~Niesen, ``Online Coded Caching,'' \textit{IEEE/ACM Trans. on Networking}, vol.~24, no.~2, pp. 836--845, Apr 2016.

\bibitem{Shariatpanahi2016}
S.~P. Shariatpanahi \textit{et~al.}, ``Multi-Server Coded Caching,'' \textit{{IEEE} Trans. Inf. Theory}, vol.~62, no.~12, pp. 7253--7271, Dec 2016.

\bibitem{pooya-cc-physical-2019-journal}
S.~P. {Shariatpanahi}, G.~{Caire}, and B.~{Hossein Khalaj}, ``Physical-Layer Schemes for Wireless Coded Caching,'' \textit{{IEEE} Trans. Inf. Theory}, vol.~65, no.~5, pp. 2792--2807, Dec 2019.

\bibitem{Naderalizadeh2017}
N.~Naderializadeh, M.~A. Maddah-Ali, and A.~S. Avestimehr, ``Fundamental Limits of Cache-Aided Interference Management,'' \textit{{IEEE} Trans. Inf. Theory}, vol.~63, no.~5, pp. 3092--3107, May 2017.

\bibitem{Naderalizadeh2017-2}
------, ``Cache-aided Interference Management In Wireless Cellular Networks,'' in \textit{Proc. IEEE Int. Conf. Commun.}, pp. 1--7, May 2017.

\bibitem{Shariatpanahi2017}
S.~P. Shariatpanahi \textit{et~al.}, ``Multi-antenna Coded Caching,'' in \textit{Proc. IEEE Int. Symp. Inf. Theory}, pp. 2113--2117, Jun 2017.

\bibitem{Ngo2017}
K.~H. Ngo, S.~Yang, and M.~Kobayashi, ``Scalable Content Delivery with Coded Caching in Multi-Antenna Fading Channels,'' \textit{{IEEE} Trans. Wireless Commun.}, vol.~17, no.~99, pp. 1--1, November 2017.

\bibitem{Piovano2017}
E.~Piovano, H.~Joudeh, and B.~Clerckx, ``On Coded Caching in the Overloaded {MISO} Broadcast Channel,'' in \textit{Proc. IEEE Int. Symp. Inf. Theory}, pp. 2795--2799, Jun 2017.


\bibitem{Tolli-Shariatpanahi-Kaleva-Khalaj-Arxiv18}
A.~{T\"olli}, S.~P. {Shariatpanahi}, J.~{Kaleva}, and B.~H. {Khalaj}, ``Multi-Antenna Interference Management for Coded Caching,'' \textit{{IEEE} Trans. Wireless Commun.}, vol.~19, no.~3, pp. 2091--2106, March 2020.

\bibitem{privateCC}
V.~{Ravindrakumar}, P.~{Panda}, N.~{Karamchandani}, and V.~M. {Prabhakaran}, ``{Private Coded Caching},'' \textit{{IEEE} Trans. Inf. Forensics Security}, vol.~13, no.~3, pp. 685--694, March 2018.

\bibitem{D2DsecureCC}
A.~A. {Zewail} and A.~{Yener}, ``{Device-to-Device Secure Coded Caching},'' \textit{{IEEE} Trans. Inf. Forensics Security}, vol.~15, pp. 1513--1524, Sep 2019.

\bibitem{CCreducedblocklength}
L.~{Tang} and A.~{Ramamoorthy}, ``Coded Caching Schemes With Reduced Subpacketization from Linear Block Codes,'' \textit{{IEEE} Trans. Inf. Theory}, vol.~64, no.~4, pp. 3099--3120, April 2018.

\bibitem{petros_subpack}
E.~{Lampiris} and P.~{Elia}, ``{Adding Transmitters Dramatically Boosts Coded-Caching Gains for Finite File Sizes},'' \textit{{IEEE} J. Select. Areas Commun.}, vol.~36, no.~6, pp. 1176--1188, Jun 2018.

\bibitem{MJTWC2021}
M.~Salehi, E.~{Parrinello}, S.~P.~{Shariatpanahi}, P.~{Elia} and A.~{Tölli}, "Low-Complexity High-Performance Cyclic Caching for Large MISO Systems," \textit{IEEE Trans. Wireless Commun.}, vol.~21, no.~5, pp.~3263-3278, May 2022.

\bibitem{mohamadsubpack}
M.~{Salehi}, A.~{Tolli}, S.~P. {Shariatpanahi}, and J.~{Kaleva},
  ``Subpacketization-Rate Trade-off in Multi-Antenna Coded Caching,'' in \textit{Proc. IEEE Global Telecommun. Conf.}, pp. 1--6, Dec 2019.
  {\bibitem{Sharedcache_emanuel}
  E.~{Parrinello}, A.~{Ünsal} and P.~{Elia}, ``Fundamental Limits of Coded Caching With Multiple Antennas, Shared Caches and Uncoded Prefetching,'' in \textit{IEEE Transactions on Information Theory}, vol. 66, no. 4, pp. 2252-2268, April 2020, doi: 10.1109/TIT.2019.2955384.}
{\bibitem{Sharedcache_decenter}
  M.~{Dutta} and A.~{Thomas}, ``Decentralized Coded Caching for Shared Caches,'' in \textit{IEEE Communications Letters}, vol. 25, no. 5, pp. 1458-1462, May 2021, doi: 10.1109/LCOMM.2021.3052237.}

\bibitem{Alpha_fair_CC}
A.~{Destounis}, A.~{Ghorbel}, G.~S. {Paschos}, and M.~{Kobayashi}, ``{Adaptive Coded Caching for Fair Delivery Over Fading Channels},'' \textit{{IEEE} Trans. Inf. Theory}, vol.~66, no.~7, pp. 4530--4546, July  2020.

\bibitem{UOS_for_CC}
R.~{Combes}, A.~{Ghorbel}, M.~{Kobayashi}, and S.~{Yang}, ``Utility Optimal Scheduling for Coded Caching in General Topologies,'' \textit{{IEEE} J. Select. Areas Commun.}, vol.~36, no.~8, pp. 1692--1705, Aug 2018.

\bibitem{Doublecc_ML}
{Z.~{Zhang }, H.~{Chen}, M.~{Hua}, C.~{Li}, Y.~{Huang}, and L.~{Yang}, ``Double coded caching in ultra dense networks: Caching and multicast scheduling via deep reinforcement learning,'' \textit{IEEE Trans. Commun.}, vol.~68, no.~2, pp. 1071-1086, 2019.}
  
\bibitem{tang2017coded}
A.~{Tang}, S.~{Roy} and X.~{Wang}, "Coded Caching for Wireless Backhaul Networks With Unequal Link Rates," \textit{IEEE Trans. Commun.}, vol.~66, no.~1, pp.~1--13, Jan 2018.

\bibitem{traditional-d2d-magazine-2014}
D.~{Feng}, L.~{Lu}, Y.~{Yuan-Wu}, G.~Y. {Li}, S.~{Li}, and G.~{Feng},``Device-to-device Communications in Cellular Networks,'' \textit{{IEEE} Commun. Mag.}, vol.~52, no.~4, pp. 49--55, April 2014.

\bibitem{hierarchicald2dcachingsolution}
X.~{Li}, X.~{Wang}, P.~{Wan}, Z.~{Han}, and V.~C.~M. {Leung}, ``Hierarchical Edge Caching in Device-to-Device Aided Mobile Networks: Modeling, Optimization, and Design,'' \textit{{IEEE} J. Select. Areas Commun.}, vol.~36, no.~8, pp. 1768--1785, Aug 2018.
  

\bibitem{MDS-CC-D2D-2019}
J.~{Pedersen}, A.~{Graell i Amat}, I.~{Andriyanova}, and F.~{Brännström}, ``{Optimizing MDS Coded Caching in Wireless Networks With Device-to-Device Communication},'' \textit{{IEEE} Trans. Wireless Commun.}, vol.~18, no.~1, pp. 286--295, Jan 2019.

\bibitem{DRL2020}
{T.~{Tiong}, I.~{Saad}, K. T. K.~{Teo} and H. b.~{Lago}, ``Deep Reinforcement Learning with Robust Deep Deterministic Policy Gradient,'' \textit{2nd Int. Conf. on Elect., Cont. and Instrum. Engin. ({ICECIE})}, Kuala Lumpur, Malaysia, pp.~1-5, 2020.}

\bibitem{UAV-ML}
{L.~{Zhou}, S.~{Leng}, Q.~{Liu} and Q.~{Wang}, ``Intelligent UAV Swarm Cooperation for Multiple Targets Tracking,'' \textit{{IEEE} Internet of Things Journal}, vol.~9, no.~1, pp.~743-754, 2022.}

\bibitem{ML-D2D-2021}
{J.~{Shuja}, et al., ``{Applying machine learning techniques for caching in next-generation edge networks: A comprehensive survey},'' \textit{Journal of Network and Computer Applications}, vol.~181, pp.~103005, 2021.}

\bibitem{Ji2016}
M.~Ji, G.~Caire, and A.~F. Molisch, ``Fundamental Limits of Caching in Wireless {D2D} Networks,'' \textit{{IEEE} Trans. Inf. Theory}, vol.~62, no.~2, pp. 849--869, Feb 2016.

\bibitem{D2D-CC-Optload-memtradeof-caire-2019}
C.~{Yapar}, K.~{Wan}, R.~F. {Schaefer}, and G.~{Caire}, ``{On The Optimality of D2D Coded Caching with Uncoded Cache Placement and One-Shot Delivery},'' \textit{{IEEE} Trans. Commun.}, vol.~67, no.~12, pp. 8179--8192, Dec 2019.

\bibitem{D2D-CC-H-cachesize-2020}
A.~M. {Ibrahim}, A.~A. {Zewail}, and A.~{Yener}, ``{Device-to-Device Coded-Caching with Distinct Cache Sizes},'' \textit{{IEEE} Trans. Commun.}, vol.~68, no.~5, pp. 2748--2762, May 2020.

\bibitem{CC-D2D-pooya-2016}
A.~{Shabani}, S.~P. {Shariatpanahi}, V.~{Shah-Mansouri}, and A.~{Khonsari}, ``Mobility Increases Throughput of Wireless Device-to-Device Networks with Coded Caching,'' in \textit{Proc. IEEE Int. Conf. Commun.}, pp. 1--6, May 2016.

\bibitem{DPDA2019}
J.~{Wang}, M.~{Cheng}, Q.~{Yan}, and X.~{Tang}, ``Placement Delivery Array Design For Coded Caching Scheme In {D2D} Networks,'' \textit{{IEEE} Trans. Commun.}, vol.~67, no.~5, pp. 3388--3395, May 2019.

\bibitem{Centralized-CC-user-cooperation-2019}
J.~{Chen}, H.~{Yin}, X.~{You}, Y.~{Geng}, and Y.~{Wu}, ``{Centralized Coded Caching with User Cooperation},'' in \textit{Proc. IEEE Inf. Theory Workshop}, pp. 1--5, Aug 2019.

\bibitem{D2Daimilar2me}
L.~{Zhong}, X.~{Zheng}, J.~{Lin}, Y.~{Cao}, and S.~{Ju}, ``Adaptive Delay optimization Of Multicast-Enabled Coded Caching in Device-to-Device Networks,'' \textit{China Communications}, vol.~17, no.~1, pp. 42--48, Jan 2020.

\bibitem{5GNR}
{E.~{Dahlman}, S.~{Parkvall}, and J.~{Skold}, ``5G NR: The next generation wireless access technology,'' \textit{Academic Press}, 2020.}


\bibitem{ISWCSmyself}
H.~B. Mahmoodi, J.~Kaleva, and A.~{T{\"o}lli}, ``Complexity Reduction in Multicast Beamforming For {D2D} Assisted Coded Caching,'' in \textit{Proc. IEEE Int. Symp. on Wireless Commun. Systems}, pp. 239--243, Aug 2019.

\bibitem{Mahmoodi-etal-Arxiv19}
H.~B. {Mahmoodi}, J.~{Kaleva}, S.~P. {Shariatpanahi}, and A.~{Tölli}, ``{D2D} Assisted Beamforming for Coded Caching,'' in \textit{Proc. IEEE Wireless Commun. and Networking Conf.}, pp. 1--6, May 2020.

\bibitem{ML-cacheaided-2020}
{M.~{Sheraz}, et al., ``{Artificial intelligence for wireless caching: Schemes, performance, and challenges},'' \textit{{IEEE} Communications Surveys and Tutorials} Vol.~23, pp. 631-661, 2020.}
  


\end{thebibliography}
\end{document}